\documentclass[journal]{IEEEtran}
\usepackage{fancyhdr}
\usepackage{amsmath,amssymb,amsthm}
\usepackage{amsfonts}
\usepackage[dvips]{graphicx}
\usepackage{dsfont}
\usepackage{comment}

\usepackage[hidelinks]{hyperref}    

\usepackage{color}
\usepackage{pgfplots}           
\pgfplotsset{compat=1.16}
\usepackage{enumerate}
\usepackage{graphics}
\usepackage{subfigure}
\usepackage{cite}
\usepackage{mathrsfs}
\usepackage{bbm}
\usepackage{stfloats}

\usepgfplotslibrary{fillbetween}
\usetikzlibrary{patterns}
\usetikzlibrary {patterns,patterns.meta}

\usepackage{tikz}
\usepackage{mathdots}
\usepackage{yhmath}
\usepackage{cancel}
\usepackage{color}
\usepackage{siunitx}
\usepackage{array}
\usepackage{multirow}
\usepackage{textcomp}
\usepackage{gensymb}
\usepackage{tabularx}
\usepackage{booktabs}
\usepackage{graphicx}
\usepackage{xcolor}
\usepackage[utf8]{inputenc}
\usepackage[T1]{fontenc}
\usepackage{algorithm}
\usepackage{algpseudocode}

\newtheorem{lemma}{Lemma}

\renewcommand{\dbinom}[2]{\left(\!\!\begin{array}{c}{#1} \\ {#2} \end{array}\!\!\right)}
\let\oldbinom\binom
\renewcommand{\binom}[2]{\mathchoice{\dbinom{#1}{#2}}{\oldbinom{#1}{#2}}{\oldbinom{#1}{#2}}{\oldbinom{#1}{#2}}}

\allowdisplaybreaks

\begin{document}
\title{Breaking Network Densification Limits with Distributed Cooperative Massive Access (DCMA)}
\author{
    Christoforos I. Dallas, Apostolos A. Tegos, Sotiris A. Tegos,~\IEEEmembership{Senior Member,~IEEE,} Yue Xiao,
    \\
    George K. Karagiannidis,~\IEEEmembership{Fellow,~IEEE} and Panagiotis D. Diamantoulakis,~\IEEEmembership{Senior Member,~IEEE}
    \thanks{C. I. Dallas and A. A. Tegos are with the Department of Electrical and Computer Engineering, Aristotle University of Thessaloniki, 54124 Thessaloniki, Greece (e-mails: christofd@ece.auth.gr, apotegath@auth.gr).}
    \thanks{S. A. Tegos, G. K. Karagiannidis and P. D. Diamantoulakis are with the Department of Electrical and Computer Engineering, Aristotle University of Thessaloniki, 54124 Thessaloniki, Greece and with the Provincial Key Laboratory of Information Coding and Transmission, Southwest Jiaotong University, Chengdu 610031, China (e-mails: tegosoti@auth.gr, geokarag@auth.gr, padiaman@auth.gr).}
    \thanks{Yue Xiao is with the Provincial Key Laboratory of Information Coding and Transmission, Southwest Jiaotong University, Chengdu 610031, China (e-mail: alice\_xiaoyue@hotmail.com)}
    \vspace{-3mm}
  }

\maketitle
\begin{abstract}   
    In this work, we investigate the performance of the distributed cooperative massive access (DCMA) framework in large-scale network setups by incorporating stochastic geometry modeling. A partially centralized cell-free cloud-radio access (C-RAN) architecture is considered where remote radio heads (RRHs) decode transmitted messages and cooperate with each other to enhance system performance. Specifically, they can share decoded messages via feedback links, allowing receivers to cancel inter-user interference through successive interference cancellation (SIC), thus improving the decoding capabilities of the system. For such a network, we propose a novel synergetic decoding algorithm that efficiently resolves the assignment and message sharing routing for each user while accounting for practical network constraints. Furthermore, using game theory, we develop a merge-and-split algorithm with lexicographic preference to solve the problem of minimizing the RRHs utilized without compromising the performance. Simulation results show that the proposed framework significantly outperforms systems that do not implement SIC or take advantage of the cooperation between RRHs in terms of outage probability. Finally, we evaluate the performance of the proposed algorithms and validate their efficiency.    
\end{abstract}

\begin{IEEEkeywords}
    distributed cooperative massive access (DCMA), cell-free, cloud-radio access network (C-RAN), stochastic geometry, cooperation, game theory
\end{IEEEkeywords}

\section{Introduction}
One of the key challenges in beyond-5G and sixth generation (6G) wireless networks is to achieve high spectral efficiency, fairness, and reliability in ultra-dense and interference-limited environments \cite{6G-Karagiannidis}. Fulfilling these increasing requirements necessitates fundamental improvements to current cellular infrastructures  and development of advanced multiple access schemes \cite{NGMA}. To this end, non-orthogonal multiple access (NOMA) stands out as a pivotal design paradigm for beyond-5G wireless networks \cite{NOMA-5G-Beyond} and can be a crucial foundation of next generation multiple access (NGMA) for 6G \cite{NOMA-to-NGMA}. Unlike conventional orthogonal multiple access (OMA), NOMA can substantially enhance both spectral efficiency and overall network connectivity \cite{5G-NOMA-Karagiannidis}. The key idea behind NOMA is to serve multiple users in the same resource block, such as time, frequency or spreading code and for this reason successive interference cancellation (SIC) is a very important technique, since it allows receivers to mitigate the interference from already decoded users. In power-domain NOMA (PD-NOMA), specifically, multiple users are allowed to share the same resource block while their separation occurs in the power domain with the aid of superposition coding (SC) and SIC. In a downlink scenario, the base station multiplexes the messages of multiple users into a single composite signal via SC. To ensure user fairness, power allocation is usually performed inversely proportional to the channel quality, where users with poor channel conditions are assigned higher transmit power, whereas users with strong channels receive lower power. At the receiver side, the strong user employs SIC to first decode the weak user's signal, reconstruct it, subtract it from the aggregate received wave, and subsequently decode its own signal. Conversely, the weak user directly decodes its own message by treating the strong user's signal as noise. In the uplink scenario, multiple users transmit their individual signals simultaneously over the shared resource block and unlike downlink, no SC is performed at the transmitters. Instead, the receiving node observes a naturally superposed aggregate signal. Due to diverse path losses and independent channel fading, these independent transmissions arrive at the receiver with distinct power levels. The receiver exploits this received power disparity to execute an iterative SIC cascade. Furthermore, there is no specific decoding order, but a common practice is to decode the users in descending order starting from the strongest. 

The flexibility of PD-NOMA in the uplink has been leveraged to resolve a diverse range of challenges in beyond-5G networks. At the fundamental physical layer, foundational research has focused on optimal power allocation strategies to maximize the uplink sum-rate capacity and spectral efficiency \cite{PD-NOMA-power-allocation}. To further mitigate intra-cell interference during the SIC cascade, dynamic user clustering algorithms have been jointly optimized with power control to ensure reliable decoding \cite{PD-NOMA-user-clustering}. Because superimposed transmissions over a shared resource block create vulnerabilities to eavesdropping, physical layer security (PLS) has been enhanced by deploying energy-harvesting jammers that emit artificial noise to protect legitimate uplink NOMA users \cite{PD-NOMA-PLS}. Furthermore, to address the latency limitations of next-generation devices, NOMA has been linked with mobile edge computing (MEC), drastically reducing the delay and energy consumption associated with computational task offloading \cite{PD-NOMA-MEC}. Finally, in the domain of sustainability, advanced resource management schemes have been developed for NOMA-based wireless-powered communication networks, optimizing time and power allocation to maximize overall energy efficiency \cite{PD-NOMA-EE}.

\subsection{Literature Review}
Although single base station (BS) uplink NOMA has been extensively investigated, distributed uplink NOMA (DU-NOMA), where distributed remote radio heads (RRHs) serve multiple users within the same resource block, has received significantly less research attention. The foundational study in \cite{CoMP} highlights the integration of NOMA within coordinated multi-point (CoMP) networks as a highly viable strategy for beyond-5G architectures. Specifically, this framework promotes spectrum sharing among adjacent base stations, allowing them to cooperatively execute SIC through the exchange of decoded messages via feedback links. Conventional CoMP schemes in the literature are based on OMA \cite{Comp_OMA}, which have limited spectral efficiency due to the fact that once a resource block is occupied by a cell-edge user, it cannot be accessed by other users. Thus, the concept was extended in non-orthogonal downlink and uplink scenarios. Specifically, in \cite{DownlinkComp_PA1, DownlinkComp_PA2} the power allocation of a downlink CoMP system using NOMA was investigated. Furthermore, the performance of such a system in terms of outage probability and capacity was investigated in \cite{DownlinkComp_Analysis}. Both perfect and imperfect channel state information (CSI) were considered and insights on the effect of clustering order on the spectral efficiency of the network were extracted. The uplink scenario was studied in \cite{UplinkComp}. Adopting a stochastic geometry approach, closed-form expressions for the outage probability and the ergodic rate of the system were extracted and the superiority of NOMA over OMA was validated.   

The flexibility of CoMP is further enhanced by introducing a global medium access control (MAC) entity, an advancement made possible by the cloud-radio access network (C-RAN) technology \cite{C-RAN}. In this context, the authors in \cite{DU-NOMA-Pappi} proposed applying distributed NOMA (DU-NOMA) to the uplink of C-RANs, focusing on maximizing the capacity region through adaptive transmission rates. A key assumption in this work is that RRHs exchange information via high-capacity, error-free links. In \cite{DU-NOMA-Panos}, the performance of DU-NOMA for fixed transmission rates was analyzed and an optimal joint user association and decoding order selection scheme was proposed. In this work, the impact of outage in the feedback link on the performance of the system was also investigated. The concept of cooperation between RRHs was also extended for the case of rate-splitting multiple (RSMA) access in \cite{DU-RSMA-Panos}, where distributed uplink RSMA (DU-RSMA) was introduced. Closed-form expressions for the outage probability, ergodic rate, and a new metric termed as fill factor were extracted, while simulation results showcased the superiority over DU-NOMA, and the extension of the achievable capacity region. The positioning of  RRHs significantly affects coverage, interference, and throughput. To capture these effects, stochastic geometry offers a tractable mathematical framework for modeling and analyzing wireless networks with randomly deployed infrastructure nodes that is a more realistic representation of future wireless networks \cite{SG-Cellular-Andrews}.

Game theory has emerged as a robust mathematical framework for resolving complex optimization problems in wireless networks. For instance, a game theoretic approach was deployed in \cite{Games-MIMO} to address user-centric access point (AP) selection, aiming to maximize the sum spectral efficiency within cell-free massive multiple-input multiple-output (MIMO) systems. Beyond access and connectivity management, game theory provides solutions for dynamic resource allocation and computational task offloading, as demonstrated in vehicular edge computing (VEC) environments \cite{VEC}. As cooperation continues to emerge as a critical networking paradigm for enhancing performance across multiple layers, coalitional games prove to be a remarkably powerful mathematical tool for overcoming inherent design challenges and developing fair, robust, practical, and efficient cooperation strategies \cite{Coalition_Games}. Notable applications of coalitional games include enhancing physical layer security in wireless communications \cite{PLS} and optimizing resource allocation within heterogeneous unmanned aerial vehicle (UAV) networks \cite{Coalition-UAV}. Furthermore, the integration of game theoretic principles with generative artificial intelligence is anticipated to be a pivotal enabler for the architectural modeling and optimization of next-generation 6G wireless networks \cite{Games-GAI}.  

\subsection{Motivation and Contribution}

Based on the aforementioned technical literature review, the existing literature has not researched the concept of distributed cooperative massive access (DCMA) in large scale networks. Taking into account the promising results of DU-NOMA and DU-RSMA in small setups, as well as the fact that real-world networks are required to support massive connectivity and extremely reliable low latency communication (eRLLC), such an extension can prove vital both for academia and industry. This study provides not only useful insights into the improvements provided by the proposed framework, but also novel algorithms for efficient real-world implementation. In particular, the contributions of this work can be summarized as follows: 
\begin{itemize}
    \item We consider a DCMA framework operating on a partially centralized cell-free C-RAN architecture. The RRHs have decoding capabilities and cooperate with each other by exchanging decoded messages through dedicated feedback links. The proposed framework, utilizing stochastic geometry, extends the existing analysis, which focuses on small-scale network setups, as it considers a large number of randomly distributed users and RRHs that both follow a homogeneous Poisson point process (PPP). 
    \item We propose a novel synergetic decoding algorithm that resolves the assignment and message sharing routing for each user in the network. This algorithm provides an efficient way for the scalable and practical deployment of such networks, utilizing a user-centric method for the formulation of RRH clusters where every decoding and sharing operation occurs within them, regarding a specific user. This way it accounts for practical constraints, operating without the assumption of perfect global CSI and eliminating the need to exchange decoded messages across the entire network.  
    \item We apply a game theoretic approach to minimize the number of  utilized RRHs, without degrading the performance of the proposed synergetic decoding algorithm. Specifically, we formulate this optimization problem as a dynamic coalition formation game and develop a merge-and-split algorithm, achieving a significant reduction in the utilization of RRHs and feedback links. This improvement reduces the computational load, alleviates the fronthaul payload, and significantly simplifies the synchronization and routing task, while inherently reducing the total power consumed
    \item We compare the proposed system model with systems that do not consider cooperating RRHs or implement SIC during the decoding phase. Simulation results demonstrate the significant superiority of our framework in terms of outage probability. Collectively, these contributions pave the way for the scalable design and optimization of large-scale DCMA-enabled networks. Furthermore, we evaluate the performance of the proposed algorithms and validate their efficiency and expected behavior. Finally, the trade-off between outage performance, CSI acquisition overhead and feedback links utilization is highlighted, providing useful practical insights.
\end{itemize}


\subsection{Structure}
The rest of the paper is organized as follows: In Section II the considered system model is proposed. Section III presents an efficient algorithm for implementation under realistic practical constraints, while Section IV formulates and solves an optimization problem to minimize the number of required RRHs utilizing game theory. In Section V, simulation results are presented that highlight the superiority of the proposed framework over benchmarks and validate the efficiency of the proposed algorithms. Finally, Section VI concludes the paper.

\section{System Model}
We consider a wireless network that consists of users that operate on the same resource block and RRHs that serve these users. We focus on the uplink scenario and adopt a cell-free architecture where users are not permanently assigned to a specific RRH and can be potentially decoded by every RRH in the network. We utilize a partially centralized cloud architecture where the RRHs have decoding capabilities and the central unit is used to perform complex calculations and is responsible for the system's coordination. The fact that differentiates this system from other approaches is it's distributed and cooperative nature. The RRHs can exchange decoded messages through a dedicated feedback link which is assumed to be ideal and error-free. This can be implemented by using high capacity fiber optics. The RRHs are able to perform successive interference cancellation (SIC) and through the aforementioned cooperation mitigate interference across the entire network.

\begin{figure}[!h]
    \centering
    \includegraphics[scale = 0.25]{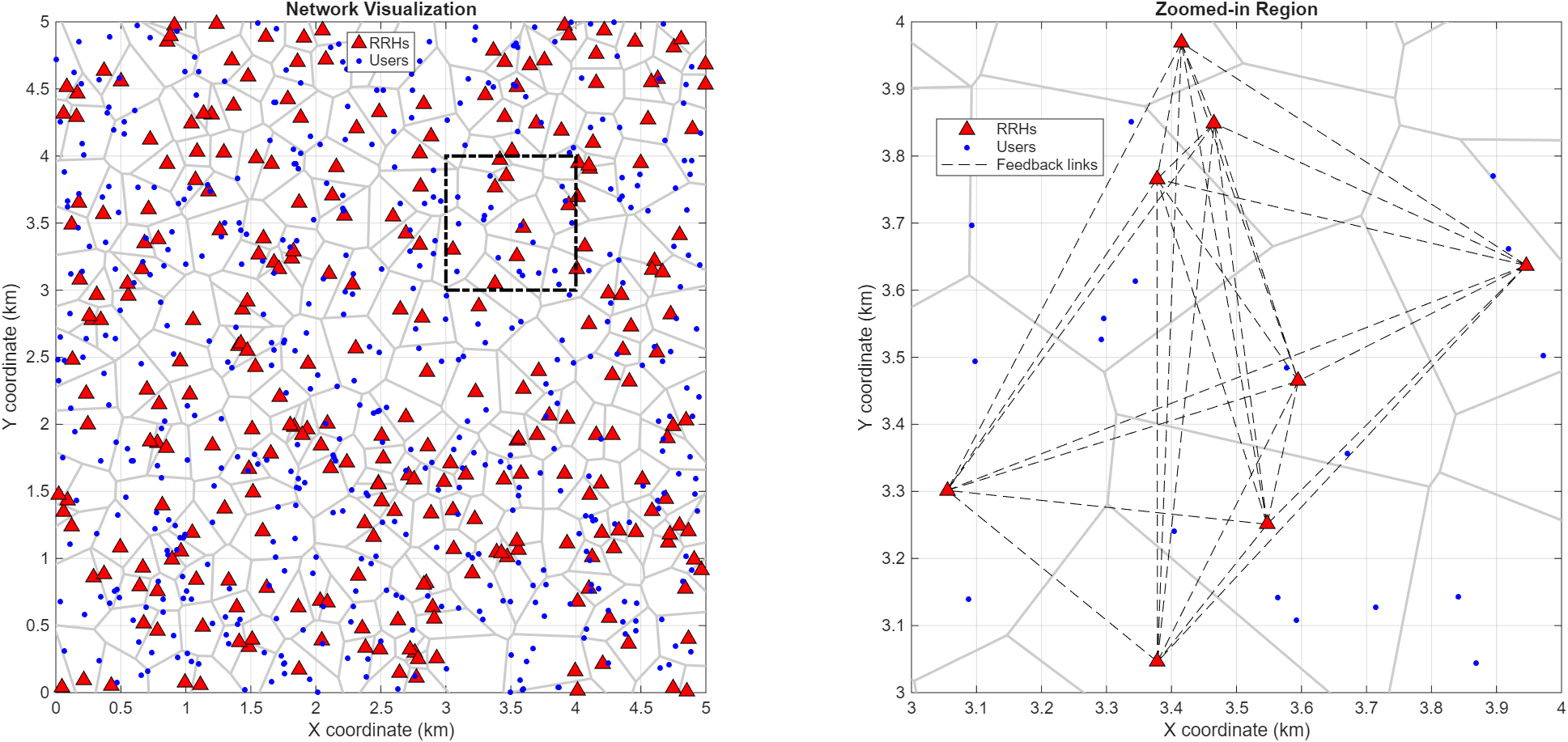}
    \caption{System model. In the left picture we see a realization of the network in a $5\times5$ km square with $\lambda_U=20$ users/$\text{km}^2$ and  $\lambda_{\mathrm{RRH}}=10$ RRHs/$\text{km}^2$. The right picture shows the zoomed-in region inside the black box highlighting the feedback links that connect the RRHs.}
    \label{System model}
\end{figure}

We denote as $U$ and $R$ the number of users and RRHs that exist in the network, respectively. We utilize stochastic geometry for the locations of both users and RRHs. Both are distributed according to a homogeneous Poisson point process (PPP) of intensity $\lambda_U$ and $\lambda_{\mathrm{RRH}}$, respectively. Consequently, $\mathbb{E}[ U]=\lambda_U\mathcal{A}$ is the expected number of users, while $\mathbb{E}[ R]=\lambda_{\mathrm{RRH}}\mathcal{A}$ is the expected number of RRHs in a network with area $\mathcal{A}$. The received signal-to-interference-plus-noise ratio (SINR) for a user $i$ at an RRH $j$ is expressed as 
\begin{equation}
    \gamma_{i,j}=\frac{\,h_{i,j}\,d_{i,j}^{-\alpha}}{I +N_0},
\end{equation}
where $d_{i,j}$ denotes the distance between user $i$ and RRH $j$, and
\begin{equation}
    I=\sum \limits_{k \neq i}\,h_{k,j}\,d^{-\alpha}_{k,j}
\end{equation}
is the cumulative interference from every other user on the network in RRH $j$ and $d_{k,j}$ denotes their distance to it.
Furthermore, $h_{n,j}$ with $n\in\{i,k\}$ denotes the small-scale fading power gain between the $j$-th RRH and the $n$-th user.
Assuming Rayleigh fading, $h_{n,j}$ follows the exponential distribution with rate parameter equal to 1. In addition, $N_0=\frac{\sigma^2}{P}$ is the effective noise power with $\sigma^2$ the power of additive white Gaussian noise (AWGN) and $P$ the transmit power of each user.
The standard power loss propagation model $d^{-\alpha}$ is used with path loss coefficient $\alpha>2$. It is known that in stochastic geometry the mean interference diverges and becomes infinite unless $\alpha$ is larger than the number of dimensions, which in our case is 2 \cite{random_distances}. The $\gamma_0=\frac{P}{\sigma^2}$ is defined to be the received signal-to-noise ratio (SNR) at a distance of $d=1$. 

All users transmit with the same fixed rate $R_i$ and the SINR threshold for successful decoding of the message of user $i$ is $\gamma_{\mathrm{th}}=2^{R_i}-1$. Furthermore, it is assumed that the channels are perfectly estimated by the RRHs while the users have no available channel state information (CSI).
The decoding process occurs in rounds which consist of a decoding and a message sharing phase. In each round, the system decodes every user that reaches the SINR threshold in at least one RRH of the network and thus can be successfully decoded. We can assume that the RRH that is assigned to decode a user is the one that achieves the highest received SINR. However, this assumption does not affect the performance evaluation of the system and the $i$-th user can be assigned to any $j$-th RRH where $\gamma_{i,j}\geq\gamma_{\mathrm{th}}$. After the decoding phase, the message sharing phase takes place, where the decoded messages are shared across the entire network and SIC is performed. Consequently, the interference caused by these messages is removed giving the opportunity for other users to be decoded in the next round. This iterative process continues until no other users can be decoded.

Let $\tau\in\{1,2,\dots,\tau_{\max}\}$ denote the round of decoding process with $\tau_{\max}$ being the last round. Consequently, $\gamma_{i,j}^{(\tau)}$ stands for the SINR of the user $i$ at RRH $j$ during round $\tau$. We introduce an indicator function 
\begin{equation}
    \mathbbm{1}_{\mathrm{out}}^{(i)} = 
\begin{cases} 
1, & \text{if } \gamma_{i,j}^{(\tau_{\max})} < \gamma_{\mathrm{th}} \quad \forall j\in \{1,2,\dots,R\} \\ 
0, & \text{if } \exists j \in \{1, 2, \dots, R\},\exists \tau:\gamma_{i,j}^{(\tau)} \geq \gamma_{\mathrm{th}} ,
\end{cases}
\end{equation}
which evaluates to 1 if the user $i$ remains undecoded at the end of the iterative process, and to 0 if it is successfully decoded during the process. We define the total system outage probability as
\begin{equation}\label{eq:outage-definition}
    P_{\mathrm{out}}=\frac{1}{U } \sum\limits_{i=1}^U\mathbbm{1}_{\mathrm{out}}^{(i)}.
\end{equation}
In the rest of the paper, when we state outage probability we refer to the total system outage probability as defined in \eqref{eq:outage-definition}. 


\section{Proposed Algorithm} \label{ProposedAlgorithm}
\subsection{Algorithm Formulation}
In this section, we present an efficient algorithm that implements the distributed decoding taking into account practical constraints such as perfect CSI for specific users and finite sharing links between the RRHs. We first present the parameters and variables that will be used in this algorithm.

\begin{table}[!h]
\vspace{-2mm}
\centering
\caption{List of Variables and Parameters}
\begin{tabular}{c||p{6.5cm}}
\hline
\textbf{Symbol} & \textbf{Description} \\ \hline \hline
$\mathcal{U}$ & Set of all users, where $u \in \{1, 2, \dots,U\}$. \\ \hline
$\mathcal{R}$ & Set of all RRHs, where $r \in \{1, 2, \dots,R\}$.  \\ \hline
$\beta_{u,r}$ & Path loss coefficient between user $u$ and RRH $r$. \\ \hline
$s_{u,r}$ & Instantaneous received signal power from user $u$ at RRH $r$. \\ \hline
$P_{\mathrm{tot},r}$ & Total instantaneous received power at RRH $r$ from all the users in the network. \\  \hline
$\beta_{\mathrm{th}}$ & Threshold for the establishment of initial clusters. It represents power and will be compared with $\beta_{u,r}$. \\ \hline
$P_{\mathrm{th}}$ & Threshold for sharing decoding messages. It is also a power threshold and will be compared with $s_{u,r}$. \\ \hline
$K_{\max}$ & Maximum number of RRHs allowed in a user's cluster. \\ \hline
$M_{\max}$ & Maximum checks per round for decoding a user. \\  \hline
$\mathcal{C}_u$ & The established RRH cluster for user $u$. \\ \hline
$\mathcal{L}_u$ & Candidate RRH checklist for decoding user $u$. \\ \hline
$\mathcal{V}_u$ & Set of valid RRHs to enter the user's $u$ cluster. \\ \hline
$\mathcal{U}_{\mathrm{a}}$ & The set of users that have not yet been decoded. \\ \hline
$\mathcal{R}_{\mathrm{a}}$ & The "Active Listener" set, representing RRHs that are candidates for decoding from the active users in the system. \\ \hline
$\mathcal{R}_{\mathrm{interf}}$ & The subset of RRHs that belong to the user's $u$ cluster, $\mathcal{C}_u$, and his instantaneous power exceeds the sharing threshold $s_{u,r}\geq P_{\mathrm{th}}$. \\ \hline
$\mathcal{U}_{\mathrm{dec}}$ & The set of successfully decoded users. \\ \hline
$\mathcal{U}_{\mathrm{new,dec}}$ & The set of users that were successfully decoded in the current round. \\ \hline
$r^*_u$ & The RRH that is assigned to decode user $u$. \\ \hline
$\mathcal{R}_{\mathrm{s},u}$ & The set of RRHs that need the decoded message of user $u$ to execute SIC,  where $\mathcal{R}_{\mathrm{s},u} \subseteq \mathcal{C}_u$.  \\ \hline
\end{tabular}
\label{tab:algorithm_variables}
\end{table}

\begin{algorithm}[!ht]
\caption{Efficient DCMA with Active Listener Sharing}    
\label{alg:active_listener}
\begin{algorithmic}[1] 
\Require $\mathcal{U}, \mathcal{R}, \boldsymbol{\beta}, \mathbf{s}, \boldsymbol{P}_{\mathrm{tot}},\gamma_{\mathrm{th}}, \beta_{\mathrm{th}}, P_{\mathrm{th}}, K_{\max}, M_{\max}, N_0$
\Ensure $\mathcal{U}_{\mathrm{dec}}$, $r^*_u$, $\mathcal{R}_{\mathrm{s},u}$
\Statex
\State Initialize: $\mathcal{U}_{\mathrm{a}} \gets \mathcal{U}$, $\mathcal{U}_{\mathrm{dec}} \gets \emptyset$
\Statex \textbf{Stage 1: RRH Clustering and CSI Acquisition}
\For{each user $u \in \mathcal{U}$}
    \State $\mathcal{V}_u \gets \{r \in \mathcal{R} \mid \beta_{u,r} \geq \beta_{\mathrm{th}} \}$ 
    \If{$|\mathcal{V}_u| > K_{\max}$}
        \State Sort $\mathcal{V}_u$ in descending order based on $\beta_{u,r}$
        \State $\mathcal{C}_u \gets$ Top $K_{\max}$ elements of the sorted $\mathcal{V}_u$
    \Else
        \State $\mathcal{C}_u \gets \mathcal{V}_u$
    \EndIf
    \State Obtain $s_{u,r}$ for $r \in \mathcal{C}_u$
    \State Sort $\mathcal{C}_u$ in descending order based on $s_{u,r}$
    \State $\mathcal{L}_u \gets$ Top $\min{(|\mathcal{C}_u|, M_{\max})}$ elements of the sorted $\mathcal{C}_u$
\EndFor
\Statex \textbf{Stage 2: Decoding and Message sharing}
\State $\mathcal{U}_{\mathrm{new,dec}}\gets \mathcal{U}$ \Comment{Initialize $\mathcal{U}_{\mathrm{new,dec}} \neq \emptyset$ in order to enter the loop}
\While{$\mathcal{U}_{\mathrm{new,dec}} \neq \emptyset$}
    \Statex \textbf{\quad Phase 1: Decoding}
    \State $\mathcal{U}_{\mathrm{new,dec}} \gets \emptyset$
    \For{each active user $u \in \mathcal{U}_{\mathrm{a}}$}
        \For{$i=1$ to $|\mathcal{L}_u|$}
            \State $r \gets \mathcal{L}_u[i]$
            \State $I_{u,r} \gets P_{\mathrm{tot},r} - s_{u,r}$
            \State $\gamma_{u,r} \gets \frac{s_{u,r}}{I_{u,r}+N_0}$
            \If{$\gamma_{u,r}\geq\gamma_{\mathrm{th}}$}
                \State $\mathcal{U}_{\mathrm{dec}} \gets \mathcal{U}_{\mathrm{dec}} \cup \{u\}$
                \State $\mathcal{U}_{\mathrm{new,dec}} \gets \mathcal{U}_{\mathrm{new,dec}} \cup \{u\}$
                \State $r^*_u \gets r$
                \State \textbf{break}
            \EndIf 
        \EndFor
    \EndFor
    \Statex \textbf{\quad Phase 2: Message sharing}
    \If{$\mathcal{U}_{\mathrm{new,dec}} \neq \emptyset$}
        \State $\mathcal{U}_{\mathrm{a}} \gets \mathcal{U}_{\mathrm{a}} \setminus \mathcal{U}_{\mathrm{new,dec}}$
        \State $\mathcal{R}_{\mathrm{a}} \gets \bigcup_{u\in \mathcal{U}_{\mathrm{a}}}\mathcal{L}_u$
        \For{each user $u \in \mathcal{U}_{\mathrm{new,dec}}$}
            \State $\mathcal{R}_{\mathrm{interf}} \gets \{r \in \mathcal{C}_u \mid s_{u,r} \geq P_{\mathrm{th}} \}$
            \State $\mathcal{R}_{\mathrm{s},u}\gets \mathcal{R}_{\mathrm{interf}} \cap \mathcal{R}_{\mathrm{a}}$
            \For{each $r \in \mathcal{R}_{\mathrm{s},u}$}
                \State $P_{\mathrm{tot},r}\gets P_{\mathrm{tot},r}-s_{u,r}$
            \EndFor
        \EndFor
    \EndIf 
\EndWhile
\end{algorithmic}
\end{algorithm}

In Stage 1 of the algorithm, each user obtains his own cluster that includes RRHs that are possible candidates for decoding him and also those that will need the decoded message to enhance their decoding capabilities during SIC. This is done using a user-centric method. We use as a criterion the path loss coefficient $\beta_{u,r}=d_{u,r}^{-\alpha}$ where $d_{u,r}$ is the distance between user $u$ and RRH $r$. Specifically, when $\beta_{u,r}\geq \beta_{\mathrm{th}}$, RRH $r$ is included in the user's cluster. The parameter $K_{\max}$ denotes the maximum number of RRHs that the user's cluster is allowed to have. It is introduced to account for limitations in CSI overhead and feedback link utilization and cap the cluster size to a certain number because it directly affects them. Taking this into account, if more than $K_{\max}$ RRHs fulfill the above mentioned requirement, we sort them in descending order according to $\beta_{u,r}$ and keep the first $K_{\max}$ RRHs resulting in the final cluster $\mathcal{C}_u$. We should notice that this process is based on the path loss coefficient that can be estimated via large-scale fading estimation, which is a lower complexity process than the small-scaling fading estimation where the complex channel gain is needed. Thus, in the proposed algorithm, small-scale fading estimation is required only for the RRHs that are included in a user's cluster, reducing the implementation complexity. This means that from now on we consider $s_{u,r}=h_{u,r}\,d_{u,r}^{-\alpha}$ to be known for each RRH inside the cluster. After the CSI is obtained, we characterize the top $M_{\max}$ RRHs with the largest  $s_{u,r}$ as those that will be checked to decode that user and obtain the $\mathcal{L}_u$ set. If the number of available RRHs in the cluster is insufficient to meet the limit $M_{\max}$, all RRHs within the cluster are allocated to the set $\mathcal{L}_u$. Consequently, the total number of RRHs in $\mathcal{L}_u$ is given by $\min(|\mathcal{C}_u|, M_{\max})$.

Stage 2 is an iterative process in which we check for each active user $u$ if there is at least one of the RRHs in $\mathcal{L}_u$ that can decode his message. This process is done starting from the RRH with the largest $s_{u,r}$ and we continue in descending order. When we find a valid RRH, the process stops and we move into the next user. Each decoding round ends when there are no other users that can be decoded in any RRH. This concludes the decoding phase of the algorithm. Subsequently, in accordance with the aforementioned decoding process, the message sharing phase begins, where the decoded messages are shared with neighboring RRHs. Specifically, for each user that was decoded in this round, the RRHs to which the decoded message should be transmitted must be selected. These are decided by taking two criteria into account. The first criterion for an RRH $r$ to receive the decoded message of user $u$ is that the RRH must be included in the user's cluster $\mathcal{C}_u$ and $s_{u,r}\geq P_{\mathrm{th}}$. If this is true, then the user's interference on that RRH is non-negligible, and its cancellation is needed for the process to continue.

For the second criterion, the "Active Listener" logic is applied, where we track which RRHs are possible decoding candidates for the active users in the system. If an RRH is not part of the $\mathcal{R}_{\mathrm{a}}$ set, even if it receives the decoded message, it will not facilitate the system to decode more users in the future. So, the RRHs that fulfill those two criteria create the set $\mathcal{R}_{\mathrm{s},u}$ where the RRH $r^*_u$ that decoded this specific user will pass the information to the RRHs that belong to $\mathcal{R}_{\mathrm{s},u}\setminus\{r^*_u\}$. Then the interference is subtracted and the next round starts. This iterative process continues until no other user can be decoded.

\subsection{Computational Complexity}
In this subsection, a comparison between the complexity of the proposed algorithm and the complexity of an exhaustive search is presented. As stated earlier, we assume that there are $U$ users and $R$ RRHs in the system. Regarding the proposed algorithm, the creation of the clusters takes place first. This requires $\beta_{u,r} \geq \beta_{\mathrm{th}}, \forall u\in\mathcal{U},\forall r\in\mathcal{R}$ to be checked, leading to $\mathcal{O}(U\, R)$ complexity. Considering the worst case scenario in which every user has $R$ RRHs that fulfill this condition,  the sorting algorithm for each user needs $R \log R$ operations. Consequently, the total complexity for all users is $\mathcal{O}(U\, R \log R)$. By adding these two, we can obtain the cluster formulation complexity as
\begin{equation}
    \mathcal{O}(U\, R+U\, R \log R)=\mathcal{O}(U\, R \log R)
\end{equation}
by keeping the dominant term. Then, having the channel estimations, we sort them. Each user has a maximum $K_{\max}$ RRHs in his cluster. This translates into $K_{\max} \log K_{\max}$ operations per user. So, the total complexity is $\mathcal{O}(U\, K_{\max} \log K_{\max})$.

For the iterative decoding process, the worst-case scenario is considered, where only one user is decoded per round. This means that in the first round we check $U$ users, in the second round $U-1$ users, while in the final round only one user is checked. This forms an arithmetic progression, which enables us to calculate the total required checks as
\begin{equation}
    U+(U-1)+(U-2)+\dots+1=\frac{U(U+1)}{2}=\frac{U^2}{2}+\frac{U}{2}.
\end{equation}
At each check, the algorithm considers at most $M_{\max}$ RRHs. So, the total complexity becomes $\mathcal{O}(U^2\, M_{\max})$ keeping only the dominant term. At each iteration, we build a mask size $R$ to track which RRHs are still needed by the remaining undecoded users. This is implemented by a Boolean array, and to check if an RRH is active inside it requires $\mathcal{O}(1)$. It maps all the $M_{\max}$ indices for all active users. This has the same form as the aforementioned arithmetic progression, which means that it has complexity $\mathcal{O}(U^2\, M_{\max})$. Afterwards, the first criterion of message sharing needs to be checked, i.e., which RRHs have $s_{u,r} \geq P_{\mathrm{th}}$, with complexity $\mathcal{O}(U\, K_{\max})$. Finally, the RRHs that are both part of the mask and fulfill the above criterion will receive the decoded message, which can be checked with complexity $\mathcal{O}(U\, K_{\max})$. It should be noted that, the interference is subtracted at most $K_{\max}$ RRHs, which means $\mathcal{O}(U\, K_{\max})$. By adding all the complexity calculations after the cluster formulation, we get
\begin{equation}
    \mathcal{O}(U\, K_{\max} \log K_{\max}+2\, U^2\, M_{\max}+3\, U\, K_{\max} )=\mathcal{O}(U^2)
\end{equation}
by keeping only the dominant terms and removing the constant ceiling parameters $K_{\max}$ and $M_{\max}$ because they are not affected by the network scaling variables $U$ and $R$. 
This results in the final complexity of the algorithm 
\begin{equation}
    \mathcal{O}(U\, R \log R+U^2).
\end{equation}

The ideal exhaustive algorithm where the user is always decoded in the RRH with the highest SINR and his decoded message is shared across the entire network, has complexity $\mathcal{O}(U^2\, R)$. This can be proved if we consider the worst-case scenario again. The difference here is that we must check all the $R$ RRHs of the network and not the $M_{\max}$ that we did before. This yields a complexity of $\mathcal{O}(U^2\, R)$. After that, interference cancellation occurs in each RRH, which is of complexity $\mathcal{O}(U\, R)$. By keeping the dominant term we end up with the final complexity. 

It is apparent that the proposed algorithm is more efficient in terms of complexity. Even if we make a numerical comparison between $\mathcal{O}(U\, R \log R+U^2)$ and $\mathcal{O}(U^2\, R)$ we can see a significant gain. It should also be highlighted that the cluster formulation phase that gives the $\mathcal{O}(U\, R \log R)$ term does not need to be executed after the completion of each round of the decoding process. It depends entirely on the slow-varying path loss coefficient, thus the cluster reformulation is needed when we have significant changes in the users locations. This means that in the decoding process of the algorithm, where the algorithm must execute continuously across each coherence interval, we see that our proposed algorithm has a complexity of $\mathcal{O}(U^2)$ in contrast to $\mathcal{O}(U^2\, R)$ that the exhaustive algorithm has. This decouples the complexity from the number of RRHs in the network, drastically reducing the operations needed, and consequently allowing for scalability in massive deployments. In addition, it should be noted that it solves the exhaustive algorithm's problem of requiring the instantaneous CSI across the entire network. In the proposed algorithm, knowledge of the complex channel is required only inside the cluster because the operations that involve decoding and SIC occur only there.

It should also be noted that the derived complexity and the required operations are an upper bound since they are derived by taking the worst-case scenario into account. In a practical network, multiple users are decoded per round and each user is rarely decoded by RRHs with lower $s_{u,r}$, thus checking all $M_{\max}$ of the cluster is usually not required. 

\section{Network Optimization}
\subsection{Optimization Problem}
In this section, we aim to further optimize the operation of the network. Specifically, the utilization of fewer active RRHs provides significant structural gains for the proposed system. Primarily, it reduces the computational load at the central unit while simultaneously decreasing the number of required active feedback links. This directly alleviates the fronthaul payload and significantly simplifies the synchronization and routing task of the decoded message exchange. Furthermore, minimizing the active hardware footprint inherently reduces the total power consumed, rendering the overall network highly energy-efficient.

Taking these into account, the goal is to minimize the RRHs used, without degrading the system's performance. Therefore, for the established user clusters $\mathcal{C}_u$, which contain a maximum of $K_{\max}$ RRHs, our aim is to decode as many users as possible while simultaneously using the minimum number of RRHs that can achieve that. The best result with formulated clusters happens when we do not constrain the number of RRHs in which each user is checked to be decoded by and we exchange the decoded messages to every member in the cluster using the Active Listener technique. This is translated into $P_{\mathrm{th}}=0$ and $M_{\max}=K_{\max}$ using the parameters of the proposed algorithm. We introduce the set $\mathcal{U}_{\mathrm{t}}$ that denotes the users the network is able to decode under these conditions.



This can be formulated as
\begin{equation} \tag{\textbf{P1}}
    \label{optim_problem}
    \begin{aligned}
         \mathop{\mathrm{min}}\limits_{\boldsymbol{x}, \boldsymbol{z}, \boldsymbol{c}, \boldsymbol{l}} \quad & \sum\limits_{r \in \mathcal{R}} x_r  \\
         \textbf{s.t.} \quad
         & \mathrm{C}_1: \, \sum\limits_{r \in \mathcal{C}_u} z_{u,r} = 1, \qquad \forall u \in \mathcal{U}_{\mathrm{t}} \\
         & \mathrm{C}_2: \, z_{u,r} \le x_r, \quad \forall u \in \mathcal{U}_{\mathrm{t}}, \forall r \in \mathcal{C}_u \\
         & \mathrm{C}_3: \, c_{w,u}(l_w + 1 - l_u) \le 0, \quad \forall u, w \in \mathcal{U}_{\mathrm{t}}, u \neq w \\
         & \mathrm{C}_4: \, z_{u,r} \Biggl[ \gamma_{\mathrm{th}} \Biggl(\omega_r + \sum\limits_{w \in \mathcal{U}_{\mathrm{t}} \setminus \{u\}} s_{w,r} (1 - c_{w,u} a_{w,r}) \Biggr) \\
         & \qquad \qquad - s_{u,r} \Biggr] \le 0, \quad \forall u \in \mathcal{U}_{\mathrm{t}}, \forall r \in \mathcal{C}_u \\
         & \mathrm{C}_5: \, x_r, z_{u,r}, c_{w,u} \in \{0, 1\}, \quad l_u \in \{1, \dots, |\mathcal{U}_{\mathrm{t}}|\}.
    \end{aligned}
\end{equation}
Here, $x_r$ denotes the status of the $r$-th RRH with 1 being active, $z_{u,r}$ denotes the assignment status with 1 meaning that user $u$ is decoded by RRH $r$, $c_{w,u}$ denotes whether the message of user $w$ is decoded and subtracted before decoding user $u$. Furthermore, $l_u$ denotes the order in which user $u$ is decoded, $\omega_r$ denotes the noise power plus the interference floor at RRH $r$ caused by users in the network, whose messages cannot be shared with this RRH, and $a_{w,r}$ denotes whether the decoded message of user $w$ is shared with RRH $r$. To formulate the problem accurately, the system's physical and operational limits must be strictly defined. First, $\mathrm{C}_1$ enforces the strict service requirement of the network, guaranteeing that every user in $\mathcal{U}_{\mathrm{t}}$ is assigned to exactly one RRH within their cluster $\mathcal{C}_u$ and successfully decoded by it. A physical limitation between user decoding and hardware activation is captured by $\mathrm{C}_2$, ensuring a user is only decoded by a powered-on RRH. To enable SIC, $\mathrm{C}_3$ establishes a strict chronological order that prevents cyclic dependencies such as users $w$ and $u$ requiring each other to be canceled first. Having this constraint guaranties that subtracting user $w$'s power from the SINR equation for user $u$ requires $w$ to be decoded at a strictly earlier stage of the process. Furthermore, if user $u$ is decoded in RRH $r$, $z_{u,r}=1$, the necessary condition $\gamma_{u,r}\geq\gamma_{\mathrm{th}}$ is guaranteed by $\mathrm{C}_4$. Finally, $\mathrm{C}_5$ defines the variable domains of the problem.

Solving this optimization problem in \eqref{optim_problem} to global optimality is strongly NP-Hard and can be shown by decomposing it into two tightly coupled combinatorial subproblems, each independently belonging to the NP-Hard complexity class. Each subproblem comes from applying a relaxation in the constraints and creating a reduced form of the original problem. 

The first subproblem is given by 
\begin{equation} \tag{\textbf{P2}}
    \label{optim_problem_sub1}
    \begin{aligned}
         \mathop{\mathrm{min}}\limits_{\boldsymbol{x}, \boldsymbol{z}} \quad & \sum\limits_{r \in \mathcal{R}} x_r  \\ 
         \textbf{s.t.} \quad
         & \mathrm{C}_1: \, \sum\limits_{r \in \mathcal{C}_u} z_{u,r} = 1, \quad \forall u \in \mathcal{U}_{\mathrm{t}} \\
         & \mathrm{C}_2: \, z_{u,r} \le x_r, \quad \forall u \in \mathcal{U}_{\mathrm{t}}, \forall r \in \mathcal{C}_u \\
         & \mathrm{C}_5: \, x_r, z_{u,r} \in \{0, 1\},
    \end{aligned}
\end{equation}
where we assume that $c_{w,u},l_u$ are fixed and ignore constraint $\mathrm{C}_4$. The residual objective collapses to minimizing the hardware sum $\sum x_r$, constrained by $\mathrm{C}_1$ and $\mathrm{C}_2$. This reduced formulation is isomorphic to the classic minimum set cover (MSC) problem. The universe is the $\mathcal{U}_{\mathrm{t}}$ and the available covering subsets are dictated by the clusters $\mathcal{C}_u$. The MSC problem is proven to be NP-Hard and it has a discrete search space of $\mathcal{O}(2^{|\mathcal{R}|})$ combinations \cite{MSC}.


For the second subproblem, we consider $x_r$ and $z_{u,r}$ to be fixed. We must derive $c_{w,u}$ and $l_u$ to satisfy constraint $\mathrm{C}_4$. This subproblem generalizes the classic linear ordering problem (LOP) into a non-linear feasibility domain and is formulated as
\begin{equation} \tag{\textbf{P3}}
    \label{optim_problem_sub2}
    \begin{aligned}
         \mathop{\mathrm{find}}\limits_{\boldsymbol{c}, \boldsymbol{l}} \quad & \boldsymbol{c}, \boldsymbol{l} \\
         \textbf{s.t.} \quad
         & \mathrm{C}_3: \, c_{w,u}(l_w + 1 - l_u) \le 0, \quad \forall u, w \in \mathcal{U}_{\mathrm{t}}, u \neq w \\
         & \mathrm{C}_4: \, z_{u,r} \Biggl[ \gamma_{\mathrm{th}} \Biggl( \omega_r + \sum\limits_{w \in \mathcal{U}_{\mathrm{t}} \setminus \{u\}} s_{w,r} (1 - c_{w,u} a_{w,r}) \Biggr) \\
         & \qquad \qquad - s_{u,r} \Biggr] \le 0, \quad \forall u \in \mathcal{U}_{\mathrm{t}}, \forall r \in \mathcal{C}_u \\
         & \mathrm{C}_5: \, c_{w,u} \in \{0, 1\}, \quad l_u \in \{1, \dots, |\mathcal{U}_{\mathrm{t}}|\}.
    \end{aligned}
\end{equation}
In this reduction, each user $u\in \mathcal{U}_{\mathrm{t}}$ is a vertex in a directed graph. A directed edge from vertex $w$ to vertex $u$ exists if user $w$'s interference must be subtracted to successfully decode user $u$ ($c_{w,u}=1$). The constraint $\mathrm{C}_3$ is enforcing an acyclic sequence and creates dependencies, making this graph a strictly directed acyclic graph (DAG). The goal is to find a chronological sequence of users to mathematically validate the SINR condition for every user. Finding a feasible linear ordering is also an NP-Hard problem that has a temporal search space bounded by $\mathcal{O}(|\mathcal{U}_{\mathrm{t}}|!)$ combinations \cite{LOP}.

These problems are deeply interconnected via the logical implications mapped in $\mathrm{C}_3$ and $\mathrm{C}_4$. The global problem cannot be decomposed and solved in disjoint phases. Furthermore, the feasible region of the solution space is non-convex due to the integer nature of the variables. Consequently, an exact global optimizer is forced to evaluate the joint combinatorial space of $\mathcal{O}(2^{|\mathcal{R}|}|\mathcal{U}_{\mathrm{t}}|!)$.

\subsection{Game Theoretic Approach}
For the aforementioned reasons, we utilize game theory to solve \eqref{optim_problem}. Specifically, we formulate our optimization problem as a dynamic coalition formation game by the pair $(\mathcal{N},v)$ \cite{Coalition_Games}. The set of  players $\mathcal{N}=\mathcal{R}$ is the set of all RRHs available in the network. The function $v$ quantifies the worth of a coalition $\mathcal{S}\subseteq \mathcal{N}$ in the game, formally referred to as the utility. In addition, to better characterize the type of the game, it is classified as a cooperative game because the RRHs act as cooperative players seeking to form coalitions to act as a single functional entity and strengthen their decoding capabilities.

Since the objective is to minimize the number of RRHs used while achieving the decoding of the $\mathcal{U}_{\mathrm{t}}$ user set, the utility of a coalition is given by 
\begin{equation}
    v(\mathcal{S})= (\Phi(\mathcal{S}),-|\mathcal{S}|).
\end{equation}
As seen, the utility is a vector, whose first element, $\Phi(\mathcal{S})$, is a network evaluator function that returns the number of users decoded by the coalition $\mathcal{S}$ and the second element, $-|\mathcal{S}|$, represents the negative cardinality of the coalition.  

\begin{lemma} \label{Lex}
    Let $y=(y_1,y_2,\dots,y_k)$ and $z=(z_1,z_2,\dots,z_k)$ be two vectors in $\mathbb{R}^k$. We say that vector $y$ is strictly lexicographically greater than vector $z$ if and only if there exists some index $m\in\{1,\dots,k\}$ such that the elements are strictly equal up to $m$ and the $m$-th element of $y$ is strictly greater than the $m$-th element of $z$. This is written as
\begin{equation}
    y\succ_{lex}z \iff \exists m \le k:\forall (i< m,y_i=z_i) \land (y_m>z_m).
\end{equation}
\end{lemma}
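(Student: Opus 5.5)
Lemma~\ref{Lex} is really a definition, so there is nothing to prove about the equivalence itself. What deserves an argument is that $\succ_{lex}$ is a legitimate strict preference: a strict total order on $\mathbb{R}^k$, with its reflexive closure $\succeq_{lex}$ a total preorder. That is exactly what the coalition formation game needs when it compares utilities $v(\mathcal{S})=(\Phi(\mathcal{S}),-|\mathcal{S}|)$. I would therefore establish three properties in order: irreflexivity, transitivity and trichotomy.

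\emph{Irreflexivity.} $y\succ_{lex}y$ would require some index $m$ with $y_m>y_m$, which is impossible.

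\emph{Trichotomy.} For $y\neq z$, let $m$ be the smallest index with $y_m\neq z_m$. It exists because the index set $\{1,\dots,k\}$ is finite and nonempty where they differ. Since $\mathbb{R}$ is totally ordered, either $y_m>z_m$, giving $y\succ_{lex}z$, or $z_m>y_m$, giving $z\succ_{lex}y$. Both cannot hold at once. The witnessing index in the definition is forced to be this first disagreement index $m$: any witness must agree with $y$ and $z$ on all earlier coordinates and differ at itself. Two different witnesses would therefore contradict each other.

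\emph{Transitivity.} Suppose $y\succ_{lex}z$ with witness $m_1$ and $z\succ_{lex}w$ with witness $m_2$, and set $m=\min(m_1,m_2)$. For every $i<m$ we have $y_i=z_i=w_i$. At $m$ there are three cases:
\begin{itemize}
\item if $m_1<m_2$, then $y_m>z_m=w_m$;
\item if $m_2<m_1$, then $y_m=z_m>w_m$;
\item if $m_1=m_2$, then $y_m>z_m>w_m$.
\end{itemize}
In every case $m$ witnesses $y\succ_{lex}w$.

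I expect the only real care to be needed in the uniqueness of the witness index, which underpins trichotomy. That step uses the quantifier structure ``$\forall i<m,\ y_i=z_i$'' exactly, so the displayed formula should be read with the universal quantifier scoped over $i<m$.

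Finally, I would specialize to $k=2$ to read off what the game uses: $v(\mathcal{S})\succ_{lex}v(\mathcal{S}')$ exactly when either $\Phi(\mathcal{S})>\Phi(\mathcal{S}')$, or $\Phi(\mathcal{S})=\Phi(\mathcal{S}')$ and $|\mathcal{S}|<|\mathcal{S}'|$. In words, decoding more users always dominates, and among equal decoding outcomes fewer RRHs are preferred. This gives the ordering that the merge-and-split rules will compare coalitions by.
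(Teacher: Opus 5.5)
Your proposal is correct, and it goes beyond the paper. The paper gives no proof of this statement; like you, it treats it as the definition of $\succ_{lex}$. The only place the paper touches the underlying order theory is the one-line proof of Lemma~\ref{max_lex}. There it simply asserts that the lexicographic relation ``establishes a total order over $\mathbb{R}^k$'' and concludes that a finite set can be sorted.

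Your irreflexivity, trichotomy and transitivity arguments are exactly the verification that assertion needs. So your route supplies a justification the paper omits rather than duplicating anything in it. With transitivity, an induction on $|\mathcal{X}|$ gives existence of the maximum. Your asymmetry argument (both $y\succ_{lex}z$ and $z\succ_{lex}y$ would force the same first-disagreement index) gives something the paper uses only implicitly: the lexicographic maximum of a finite set is unique. That uniqueness is what makes $\max_{lex}$ a well-defined operator in $W(\pi)=\max_{lex}\{v(\mathcal{S})\mid\mathcal{S}\in\pi\}$ and in the computation of $W_{\max}$ in Algorithm~\ref{alg:slms}.

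One small refinement: on $\mathbb{R}^k$ the reflexive closure $\succeq_{lex}$ is actually a total order, since it is antisymmetric, not merely a total preorder. The genuine preorder only appears after pulling back along $v$, because distinct coalitions can share a utility vector. Your $k=2$ specialization matches the paper's displayed characterization of $v(\mathcal{S}_1)\succ_{lex}v(\mathcal{S}_2)$ exactly.
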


To evaluate and compare the worth of two coalitions $\mathcal{S}_1$ and $\mathcal{S}_2$ fairly, a lexicographic preference is adopted in our game. Specifically, taking Lemma \ref{Lex} into account, it is said that the utility of coalition $\mathcal{S}_1$ is greater than that of $\mathcal{S}_2$ when 
\begin{equation}
    \begin{split}
        v(\mathcal{S}_1) \succ_{lex} v(\mathcal{S}_2) \iff & \Big(\Phi(\mathcal{S}_1) > \Phi(\mathcal{S}_2)\Big) \\
        &\!\! \lor \Big(\!\Phi(\mathcal{S}_1)\! =\! \Phi(\mathcal{S}_2) \!\land\! -|\mathcal{S}_1|\! >\! -|\mathcal{S}_2| \Big).
    \end{split}
\end{equation}
By structuring the utility lexicographically, the framework mathematically guaranties that redundant RRHs will be powered off, maximizing the second element of the utility, while ensuring that the primary constraint of user coverage is never compromised.

In canonical games, cooperation is assumed to be universally beneficial, meaning that the optimal state is the grand coalition where every player joins in a common coalition. In our case, the game does not have the superadditivity property because each active RRH incurs a hardware penalty $x_r$ and arbitrarily adding RRHs to the coalition increases cost without necessarily improving coverage. Therefore, the grand coalition is almost never the optimal topology. 

\begin{lemma} \label{max_lex}
    Given a finite, non-empty set of vectors $\mathcal{X} \subset \mathbb{R}^k$, there exists a maximal vector $x^* \in \mathcal{X}$ under the lexicographic order. The lexicographic maximum operator, denoted as $\max_{lex}$, isolates this vector such that
    \begin{equation}
        x^*= \max_{lex} \mathcal{X} \implies x^* \succeq_{lex} x, \quad \forall x\in \mathcal{X},
    \end{equation}
    where $\succeq_{lex}$ denotes the weak lexicographic preference relation that holds if and only if $x^* \succ_{lex} x$ or $x^* = x$.
\end{lemma}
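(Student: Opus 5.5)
The plan is to show first that $\succ_{lex}$, as defined in Lemma~\ref{Lex}, is a strict total order on $\mathbb{R}^k$. Finiteness of $\mathcal{X}$ then gives a greatest element by a straightforward induction on $|\mathcal{X}|$.

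\textbf{Step 1: strict total order.}
\begin{itemize}
\item \emph{Irreflexivity.} $y\succ_{lex} y$ would require some index $m$ with $y_m>y_m$, which is impossible.
\item \emph{Trichotomy.} For $y\neq z$, let $m$ be the smallest index with $y_m\neq z_m$. Since $\mathbb{R}$ is totally ordered, either $y_m>z_m$, giving $y\succ_{lex}z$, or $y_m<z_m$, giving $z\succ_{lex}y$. Exactly one of $y=z$, $y\succ_{lex}z$, $z\succ_{lex}y$ holds: if both strict relations held, with witnesses $m$ and $m'$, then taking $m\le m'$ without loss of generality gives $y_m>z_m$ and $y_m=z_m$ at once, a contradiction.
\item \emph{Transitivity.} Suppose $x\succ_{lex}y$ with witness $m_1$ and $y\succ_{lex}z$ with witness $m_2$. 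Let $m=\min(m_1,m_2)$. For $i<m$ we have $x_i=y_i=z_i$. At index $m$ we have $x_m\ge y_m\ge z_m$, and at least one of the two inequalities is strict, the one belonging to the witness attaining the minimum. Hence $x\succ_{lex}z$.
\end{itemize}
Together these give that $\succeq_{lex}$ (strict preference or equality) is reflexive, antisymmetric, transitive and total, i.e. a total order on $\mathbb{R}^k$.

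\textbf{Step 2: existence of the maximum.} We argue by induction on $n=|\mathcal{X}|\ge 1$.
\begin{itemize}
\item For $n=1$ the single element is the maximum.
\item For the inductive step, write $\mathcal{X}=\mathcal{X}'\cup\{x_{\text{new}}\}$. Let $x'$ be the maximum of $\mathcal{X}'$, which exists by the inductive hypothesis.
\item By totality, either $x'\succeq_{lex}x_{\text{new}}$ or $x_{\text{new}}\succ_{lex}x'$. Set $x^*$ to be whichever of the two is larger.
\item Transitivity then shows $x^*\succeq_{lex}x$ for every $x\in\mathcal{X}$.
\item Antisymmetry shows this maximum is unique, so the operator $\max_{lex}$ is well defined.
\end{itemize}

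Alternatively, $x^*$ can be constructed directly by coordinate-wise filtering. Set $\mathcal{X}_0=\mathcal{X}$ and $\mathcal{X}_i=\{x\in\mathcal{X}_{i-1}: x_i=\max_{y\in\mathcal{X}_{i-1}}y_i\}$. Each maximum exists because $\mathcal{X}_{i-1}$ is finite and non-empty, and the set $\mathcal{X}_k$ is a singleton $\{x^*\}$. I may include this construction as a remark, since it mirrors how $v(\mathcal{S})$ is compared later: $\Phi$ is compared first, then $-|\mathcal{S}|$.

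\textbf{Main obstacle.} The only delicate point is transitivity, specifically handling the witness indices when $m_1\neq m_2$. Taking their minimum resolves it. Everything else is routine bookkeeping.
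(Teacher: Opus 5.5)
Your proposal is correct and follows the same route as the paper: the paper simply asserts that $\succ_{lex}$ totally orders $\mathbb{R}^k$ and that a finite set can therefore be sorted to expose its maximum, while you verify the order axioms explicitly and replace ``sorting'' with an induction on $|\mathcal{X}|$, which is a faithful elaboration rather than a different argument. One small fix: in the exclusivity part of trichotomy, the case $m=m'$ gives $y_m>z_m$ and $z_m>y_m$ rather than $y_m=z_m$; this is still a contradiction, but it should be stated as a separate case.
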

\begin{IEEEproof}
    Since $\mathcal{X}$ is a finite set and the lexicographic preference relation establishes a total order over $\mathbb{R}^k$, the vectors in $\mathcal{X}$ can be strictly sorted. Consequently, a maximal element with respect to this total order is mathematically guaranteed to exist within the set.
\end{IEEEproof}

A coalition structure $\pi=\{\mathcal{S}_1,\mathcal{S}_2,\dots,\mathcal{S}_K\}$ with $K\le|\mathcal{N}|$ is a partition of $\mathcal{N}$. For each coalition in the structure, it holds that $\mathcal{S}_k\neq\emptyset$, $\bigcup_{k=1}^K \mathcal{S}_k = \mathcal{N}$, and $\mathcal{S}_k \cap\mathcal{S}_l = \emptyset, \forall k,l\in\{1,2,\dots,K\}$ with $k\neq l$. In our case, the network partition will have an active coalition $\mathcal{R}_{\mathrm{on}}\in\pi$ where the powered-on RRHs that collaborate to serve the users are included. In addition, the deactivated RRHs can be mathematically treated as singleton coalitions. The utility of a deactivated singleton coalition is assumed to be $(0,-1)$. We define the total social welfare of a coalition structure $\pi$ as 
\begin{equation}
    W(\pi) = \max_{lex}{\{v(\mathcal{S}) \mid \mathcal{S} \in \pi\}}.
\end{equation}

\begin{algorithm}[!hb]
\caption{Stochastic Lexicographic Merge-and-Split} 
\label{alg:slms}
\begin{algorithmic}[1] 
\Require $\mathcal{R}$, $\mathcal{U}_{\mathrm{t}}$, $\pi_{\mathrm{seed}}$, $\mathcal{R}_{\mathrm{crit}}$ 
\Ensure $\pi_{\mathrm{opt}}$
\State Initialize partition state: $\pi \gets \pi_{\mathrm{seed}}$
\State Initialize $K_{\mathrm{limit}} \gets |\mathcal{R}_{\mathrm{on}}|$
\While{\textbf{true}}
    \Statex \quad \textbf{Phase 1: Utilitarian Split}
    \State $\Pi_{\mathrm{valid}} \gets \{\pi_{-r} \mid r \in \mathcal{R}_{\mathrm{on}} \setminus \mathcal{R}_{\mathrm{crit}} \land W(\pi_{-r}) \succ_{lex} W(\pi)\}$
    \If{$|\Pi_{\mathrm{valid}}| > 0$}
        \State $\pi^* \in_R \Pi_{\mathrm{valid}}$
        \State $\pi \gets \pi^*$
        \State \textbf{continue} \Comment{Restart loop to evaluate further splits}
    \EndIf
    \Statex \quad \textbf{Phase 2: Target Evaluation}
    \If{$\Phi(\mathcal{R}_{\mathrm{on}}) = |\mathcal{U}_{\mathrm{t}}|$}
        \State \textbf{break} \Comment{Target met and footprint is fully pruned}
    \EndIf
    \Statex \quad \textbf{Phase 3: State Space Escalation \& Utilitarian Merge}
    \If{$|\mathcal{R}_{\mathrm{on}}| = K_{\mathrm{limit}}$} 
        \State $K_{\mathrm{limit}} \gets K_{\mathrm{limit}} + 1$ 
    \EndIf
    \State Evaluate: $W_r \gets W(\pi_{+r}), \forall r \in \mathcal{R} \setminus \mathcal{R}_{\mathrm{on}}$ 
    \State Find max: $W_{\max} \gets \max_{lex}(W_r)$
    \If{$W_{\max} \succ_{lex} W(\pi)$}
        \State $\Pi_{\mathrm{best}} \gets \{\pi_{+r} \mid r \in \mathcal{R} \setminus \mathcal{R}_{\mathrm{on}} \land W(\pi_{+r}) = W_{\max}\}$
        \State $\pi^* \in_R \Pi_{\mathrm{best}}$
        \State $\pi \gets \pi^*$
    \Else
        \State $\pi \gets \pi_{\mathcal{R}}$ 
        \State \textbf{break} 
    \EndIf
\EndWhile
\State \Return $\pi_{\mathrm{opt}} \gets \pi$
\end{algorithmic}
\end{algorithm}

To solve this game, we have developed a merge-and-split algorithm that follows the utilitarian order. According to this order, the players prefer to alter the partition $\pi$ into $\pi^\prime$ if this new topology increases the total social welfare, i.e., $W(\pi')\succ_{lex}W(\pi)$. As we can see, there are the partition $\pi_{\mathrm{seed}}$ and the RRH set $\mathcal{R}_{\mathrm{crit}}$ that are the initialization of the algorithm and a set of mandatory RRHs that cannot be removed from the active coalition, respectively. These two are explained in detail later. We define $\pi_{-r}$, the new partition created by splitting the RRH $r$ from the active coalition into a singleton one and powering it off. Similarly, $\pi_{+r}$ denotes the new coalition structure created by merging the singleton coalition that includes RRH $r$ with the active coalition. The variable $K_{\mathrm{limit}}$ bounds the maximum number of RRHs that can exist in the active coalition in the current stage of the algorithm. Furthermore, $\pi$ is the current partition state, $\pi^*$ is the new partition formed after a merge or split operation, and $\pi_{\mathcal{R}}$ denotes the partition that includes the grand coalition where $\mathcal{R}_{\mathrm{on}}=\mathcal{R}$.

\begin{algorithm}[!b]
\caption{Mandatory RRHs and Initialization} 
\label{alg:seed_init}
\begin{algorithmic}[1] 
\Require $\mathcal{R}, \mathcal{U}_{\mathrm{t}}, \mathcal{C}_u, \textbf{s}, N_0, \gamma_{\mathrm{th}}, \boldsymbol{P}_{\mathrm{tot}}$ 
\Ensure $ \pi_{\mathrm{seed}}, \mathcal{R}_{\mathrm{crit}}$
\Statex \quad \textbf{Phase 1: Feasibility Pruning}
\State Compute residual interference plus noise $\forall r \in \mathcal{R}$: $\omega_r \gets N_0 + P_{\mathrm{tot},r} - \sum_{\{v \in \mathcal{U}_{\mathrm{t}} \mid r \in \mathcal{C}_v\}} s_{v,r}$
\State Prune infeasible links $\forall u \in \mathcal{U}_{\mathrm{t}}$: $\tilde{\mathcal{C}}_u \gets \left\{ r \in \mathcal{C}_u \;\middle|\; \frac{s_{u,r}}{\omega_r } \geq \gamma_{\mathrm{th}} \right\}$
\Statex \quad \textbf{Phase 2: Mandatory RRHs Extraction}
\State $\mathcal{R}_{\mathrm{crit}} \gets \bigcup_{\{u \in \mathcal{U}_{\mathrm{t}} \mid |\tilde{\mathcal{C}}_u| = 1\}} \tilde{\mathcal{C}}_u$ 

\Statex \quad \textbf{Phase 3: Seed Selection}
\If{$|\mathcal{R}_{\mathrm{crit}}| > 0$}
    \State $\mathcal{R}_{\mathrm{on}} \gets \mathcal{R}_{\mathrm{crit}}$
\Else
    \State $\mathcal{R}_{\mathrm{useful}} \gets \bigcup_{u \in \mathcal{U}_{\mathrm{t}}} \mathcal{C}_u$
    \State $r^* \in_R \mathcal{R}_{\mathrm{useful}}$, \quad $\mathcal{R}_{\mathrm{on}} \gets \{r^*\}$ 
\EndIf
\State Form initial partition: $\pi_{\mathrm{seed}} \gets \{\mathcal{R}_{\mathrm{on}}\} \cup \{\{r\} \mid r \in \mathcal{R} \setminus \mathcal{R}_{\mathrm{on}} \}$
\end{algorithmic}
\end{algorithm}

The algorithm aiming to remove redundant RRHs evaluates the temporary isolation of each non-critical active RRH, $\mathcal{R}_{\mathrm{on}} \setminus \mathcal{R}_{\mathrm{crit}}$, into a singleton. This split is considered valid if the resulting partition, $\pi_{-r}$, yields a total social welfare $W(\pi_{-r})$ that is lexicographically greater than the current baseline $W(\pi)$. Since removing an RRH inherently improves the secondary objective, this condition mathematically guaranties that the drop is accepted if and only if the primary objective remains stable, i.e., the same number of users are decoded. If multiple valid splits $\Pi_{\mathrm{valid}}$ exist, the algorithm selects one randomly to avoid deterministic local minima, and then the loop starts again to find a new split. If the network reaches a state where no further splits are possible, the algorithm will search for a merge operation. Initially, it will check if the number of RRHs in the active coalition has reached $K_{\mathrm{limit}}$. If this is true and all the users in the $\mathcal{U}_{\mathrm{t}}$ set are not yet decoded, it increases the $K_{\mathrm{limit}}$ by one. To add a new RRH to the coalition, the algorithm evaluates candidate partitions $\pi_{+r}$ formed by merging the active coalition with a single sleeping RRH $r \in\mathcal{R}\setminus \mathcal{R}_{\mathrm{on}}$. Since activating new RRHs is strictly penalized by the second element of the utility vector, the lexicographic operator mandates that a merge is only accepted if it strictly increases the number of decoded users. The algorithm computes the total social welfare of all potential merges, identifies the candidate that produces the maximum lexicographic improvement $W_{\max}$, and executes the merge. Again, if there are multiple partitions $\Pi_{\mathrm{best}}$ that produce the same maximum social welfare improvement, one is randomly selected. Furthermore, it should be noted that a split operation is inherently prioritized and tested each time a change happens in the network partition. This process continues until we reach a partition that meets the primary objective. Finally, there is a safeguard step that if there is no action that can improve the current state and $\mathcal{U}_{\mathrm{t}}$ is not met, the algorithm reverts back to the full initial network that forms $\pi_{\mathcal{R}}$. This process is summarized in Algorithm \ref{alg:slms}.

The initialization phase of the proposed algorithm establishes the initial network partition, denoted by $\pi_{\mathrm{seed}}$, by identifying a critical subset of mandatory RRHs, $\mathcal{R}_{\mathrm{crit}}$. To determine $\mathcal{R}_{\mathrm{crit}}$, we first compute the residual interference plus noise in each RRH $r \in \mathcal{R}$. This interference is defined as the noise power plus the total received power at $r$ excluding the signals from target users that include $r$ in their clusters and can potentially be canceled via SIC, expressed as
\begin{equation}
\omega_r = N_0 + P_{\mathrm{tot},r} - \sum_{\{v \in \mathcal{U}_{\mathrm{t}} \mid r \in \mathcal{C}_v\}} s_{v,r}.
\end{equation}
Using $\omega_r$, we evaluate the maximum achievable SINR for each target user across its candidate RRHs. If a user $u \in \mathcal{U}_{\mathrm{t}}$ satisfies the required SINR threshold at exactly one RRH that specific RRH represents the only decoding opportunity for the user. Consequently, this RRH is always required to be active in order for the network to be able to serve $\mathcal{U}_{\mathrm{t}}$ and is appended to $\mathcal{R}_{\mathrm{crit}}$. Finally, the initialization coalition structure $\pi_{\mathrm{seed}}$ includes the set $\mathcal{R}_{\mathrm{crit}}$ as the active coalition $\mathcal{R}_{\mathrm{on}}$ and each other RRH as singleton. In the event that the are no mandatory RRHs, i.e., $\mathcal{R}_{\mathrm{crit}} = \emptyset$, $\mathcal{R}_{\mathrm{on}}$ is initialized using a single randomly selected RRH. The complete initialization procedure is detailed in Algorithm \ref{alg:seed_init}.

\begin{figure}[!h]
    \centering
    \includegraphics[scale = 0.4]{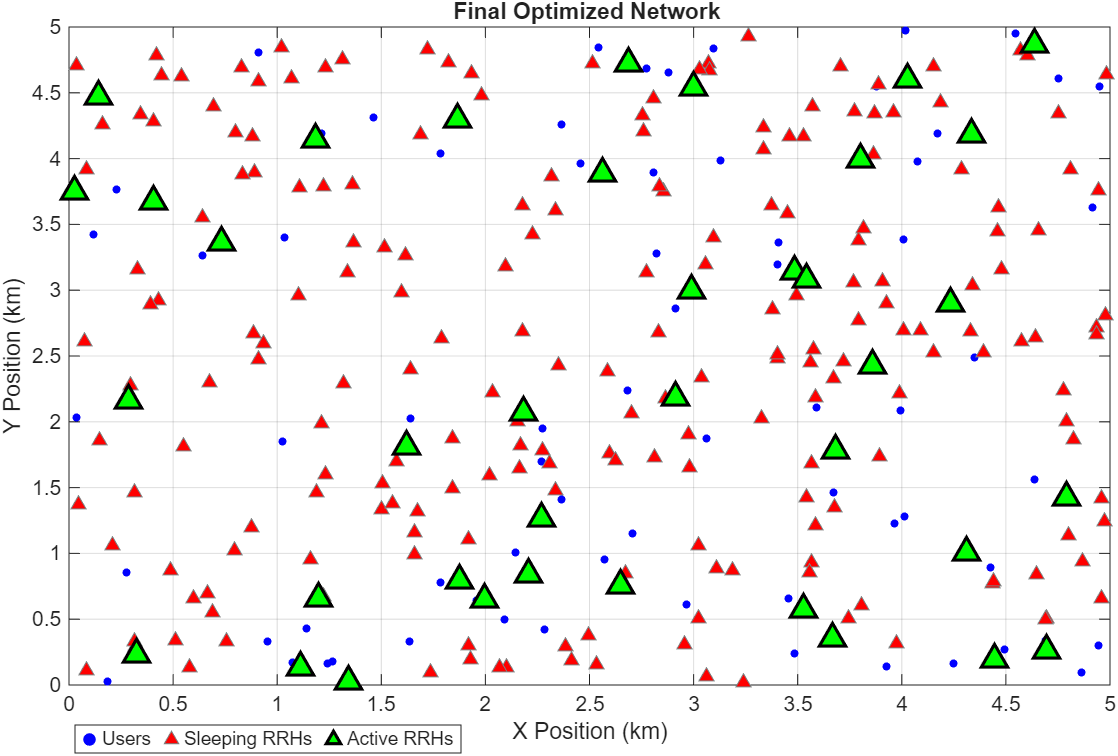}
    \caption{Final state of a network in a $5\times5$ km square with $\lambda_U=3$ users/$\text{km}^2$, $\lambda_{\mathrm{RRH}}=10$ RRHs/$\text{km}^2$, $\alpha=4$ and $\gamma_{\mathrm{th}}=6$ dB using Algorithm \ref{alg:slms}.}
    \label{game_example}
\end{figure}

In Fig. \ref{game_example}, we see the result using the stochastic lexicographic merge-and-split (SLMS) algorithm. In this specific example, all target users are decoded while deactivating $84.8 \%$ of the initial network.

Finally, the algorithm achieves a $\mathbb{D}_{hp}$ stability which is a weak equilibrium-like stability \cite{Coalition_Games}. It converges in a stable partition where no player has intention to perform a merge or a split operation. This type of stability can be considered as a kind of equilibrium with respect to merge-and-split. 

\section{Numerical Results and Simulations}
\subsection{Proposed Synergetic Decoding Algorithm}
In this subsection, the simulation results on the system performance are presented, when the algorithm proposed in Section \ref{ProposedAlgorithm} is implemented. In each case, we utilize a $10\times10$ km square to simulate a large-scale network. To mitigate edge effects, we apply the toroidal euclidean metric. The distance between the points $p_1=(x_1,y_1)$ and $p_2=(x_2,y_2)$ is given by
\begin{equation}
    d_{torus}(p_1,p_2)=\sqrt{d^2(x_1,x_2)+d^2(y_1,y_2)},
\end{equation}
where $d(x_1,x_2)=\min{(|x_1-x_2|,L-|x_1-x_2|)}$ and $L$ is the length of the side of the square. This method simulates an infinitely expanding homogeneous network using a finite area. It ensures that every RRH is subjected to an omnidirectional interference field and every user observes a symmetric, 360-degree distribution candidate serving RRHs. Also, we consider an SNR $\gamma_0=10$ dB and the RRH density is $\lambda_{\mathrm{RRH}}=10$  RRHs/$\text{km}^2$. Each point in the simulation curves is produced by Monte Carlo simulations with 10000 network realizations in the corresponding user density.

\begin{figure}[!h]
       \centering
       \begin{tikzpicture}
           \begin{semilogyaxis}[
           width = 0.85\linewidth,
           xlabel = {User Density (users/$\text{km}^2$)},
           ylabel = {Outage Probability},
           ymin = 0.00001,
           ymax = 1,
           xmin = 10,
           xmax = 25,
           grid = major,
           legend entries = {{DCMA},{No SIC + No Cooperation},{Local SIC}},
           legend cell align = {left},
           legend style = {font = \scriptsize},
           legend style={at={(1,0)},anchor=south east}
           ]

    \addplot[
    black,
    mark = x,
    mark repeat = 5,
    mark size = 3,
    mark phase = 0,
    line width = 1pt
    ]
    table {Paper_Data/Compare_3/duma_a_4_t_0_snr_10_smooth.dat};

    \addplot[
    black,
    mark = square,
    mark repeat = 2,
    mark size = 3,
    mark phase = 0,
    line width = 1pt
    ]
    table {Paper_Data/Compare_3/baseline_a_4_t_0_snr_10.dat};

    \addplot[
    black,
    mark = o,
    mark repeat = 2,
    mark size = 3,
    mark phase = 0,
    line width = 1pt
    ]
    table {Paper_Data/Compare_3/local_sic_a_4_t_0_snr_10.dat};
           \end{semilogyaxis}
       \end{tikzpicture}
       \vspace{-2mm}
       \caption{Outage probability vs user density for path loss exponent $\alpha=4$ and SINR threshold $\gamma_{\mathrm{th}}=0$ dB.}
       \vspace{-2mm}
       \label{a_4_compare_3}

   \end{figure} 

\begin{figure}[!h]
       \centering
       \begin{tikzpicture}
           \begin{semilogyaxis}[
           width = 0.85\linewidth,
           xlabel = {User Density (users/$\text{km}^2$)},
           ylabel = {Outage Probability},
           ymin = 0,
           ymax = 1,
           xmin = 5,
           xmax = 15,
           xtick = {5,7,9,11,13,15},
           xticklabels = {5,7,9,11,13,15},
           grid = major,
           legend entries = {{DCMA},{No SIC + No Cooperation},{Local SIC}},
           legend cell align = {left},
           legend style = {font = \scriptsize},
           legend style={at={(1,0)},anchor=south east}
           ]

    \addplot[
    black,
    mark = x,
    mark repeat = 5,
    mark size = 3,
    mark phase = 0,
    line width = 1pt
    ]
    table {Paper_Data/Compare_3/duma_a_3_t_0_snr_10_smooth.dat};

    \addplot[
    black,
    mark = square,
    mark repeat = 2,
    mark size = 3,
    mark phase = 0,
    line width = 1pt
    ]
    table {Paper_Data/Compare_3/baseline_a_3_t_0_snr_10.dat};

    \addplot[
    black,
    mark = o,
    mark repeat = 2,
    mark size = 3,
    mark phase = 0,
    line width = 1pt
    ]
    table {Paper_Data/Compare_3/local_sic_a_3_t_0_snr_10.dat};
           \end{semilogyaxis}
       \end{tikzpicture}
       \vspace{-2mm}
       \caption{Outage probability vs user density for path loss exponent $\alpha=3$ and SINR threshold $\gamma_{\mathrm{th}}=0$ dB.}
       \vspace{-2mm}
       \label{a_3_compare_3}

   \end{figure} 

The proposed system is compared with two baseline architectures. The first baseline assumes neither SIC nor cooperation among the RRHs. In this scheme, a user $i$ is successfully decoded if there exists at least one RRH $j$ where the SINR satisfies $\gamma_{i,j} \geq \gamma_{\mathrm{th}}$. The second baseline employs local SIC without cooperation. Here, a user is considered decoded if it is successfully resolved in the independent SIC cascade of at least one RRH. Both baselines evaluate the best outcome under their respective decoding scheme, ensuring a fair comparison. 

In Fig. \ref{a_4_compare_3} and Fig. \ref{a_3_compare_3}, the outage probability for a network with path loss exponents $\alpha=4$ and $\alpha=3$, respectively, is presented. In both cases, the SINR threshold is considered $\gamma_{\mathrm{th}}=0$ dB. It is obvious that the proposed DCMA significantly outperforms the benchmarks, especially in scenarios with lower user density, in which the proposed system is able to decode almost every user in the network, unlike the rest of the schemes which can decode less than half of the users. This superiority can be attributed to the fact that SIC is implemented in the proposed scheme, but most importantly to the cooperation between the RRHs, since decoding the message of a user in a single RRH suffices to be transmitted and considered known across the entire network, thus allowing every RRH to cancel its interference. Furthermore, it should be highlighted that the investigated system exhibits a counterintuitive behavior. Specifically, as the path loss exponent $\alpha$ increases, which leads to increased path loss attenuation, the system can support and decode higher user densities. This can be justified by considering that this causes higher channel disparities and channel diversity between the users and the RRHs which is known to facilitate the implementation of SIC because of higher interference suppression. The fact that there are multiple RRHs which can decode the message of each user and then share it with the rest of the network compensates for the increased path loss, enhancing the network performance. 


\begin{figure}[!h]
       \centering
       \begin{tikzpicture}
           \begin{semilogyaxis}[
           width = 0.85\linewidth,
           xlabel = {User Density (users/$\text{km}^2$)},
           ylabel = {Outage Probability},
           ymin = 1e-5,
           ymax = 1,
           xmin = 10,
           xmax = 25,
           grid = major,
           legend cell align = {left},
           legend style = {font = \scriptsize},
           legend style={at={(1,0)},anchor=south east}
           ]

    \addlegendimage{no markers, black, line width=1pt}
    \addlegendentry{Ideal}
    \addlegendimage{no markers, red, line width=1pt}
    \addlegendentry{$K_{\max}=100$}
    \addlegendimage{no markers, blue, line width=1pt}
    \addlegendentry{$K_{\max}=300$}
    \addlegendimage{no markers, green, line width=1pt}
    \addlegendentry{$K_{\max}=500$}
    \addlegendimage{no markers, magenta, line width=1pt}
    \addlegendentry{$K_{\max}=700$}
    \addlegendimage{only marks, mark=square, color=black, mark size=3, line width=1pt}
    \addlegendentry{$M_{\max}=5$}
    \addlegendimage{only marks, mark=triangle, color=black, mark size=3, line width=1pt}
    \addlegendentry{$M_{\max}=15$}

    \addplot[
    black,
    line width = 1pt,
    forget plot
    ]
    table {Paper_Data/Compare_3/duma_a_4_t_0_snr_10_smooth.dat};

    \addplot[
    red,
    mark = square,
    mark repeat = 2,
    mark size = 3,
    mark phase = 0,
    line width = 1pt,
    forget plot
    ]
    table {Paper_Data/Active_listener/active_a_4_t_0_K_100_M_5.dat};

    \addplot[
    red,
    mark = triangle,
    mark repeat = 2,
    mark size = 3,
    mark phase = 0,
    line width = 1pt,
    forget plot
    ]
    table {Paper_Data/Active_listener/active_a_4_t_0_K_100_M_15.dat};



    \addplot[
    blue,
    mark = square,
    mark repeat = 2,
    mark size = 3,
    mark phase = 0,
    line width = 1pt,
    forget plot
    ]
    table {Paper_Data/Active_listener/active_a_4_t_0_K_300_M_5.dat};

    \addplot[
    blue,
    mark = triangle,
    mark repeat = 2,
    mark size = 3,
    mark phase = 0,
    line width = 1pt,
    forget plot
    ]
    table {Paper_Data/Active_listener/active_a_4_t_0_K_300_M_15_pth_0_01.dat};


    \addplot[
    green,
    mark = square,
    mark repeat = 2,
    mark size = 3,
    mark phase = 0,
    line width = 1pt,
    forget plot
    ]
    table {Paper_Data/Active_listener/active_a_4_t_0_K_500_M_5_pth_0_001_bth_0_01.dat};

    \addplot[
    green,
    mark = triangle,
    mark repeat = 2,
    mark size = 3,
    mark phase = 0,
    line width = 1pt,
    forget plot
    ]
    table {Paper_Data/Active_listener/active_a_4_t_0_K_500_M_15_pth_0_001_bth_0_01.dat};


    \addplot[
    magenta,
    mark = square,
    mark repeat = 2,
    mark size = 3,
    mark phase = 0,
    line width = 1pt,
    forget plot
    ]
    table {Paper_Data/Active_listener/active_a_4_t_0_K_700_M_5_pth_0_001_bth_0_01.dat};

    \addplot[
    magenta,
    mark = triangle,
    mark repeat = 2,
    mark size = 3,
    mark phase = 0,
    line width = 1pt,
    forget plot
    ]
    table {Paper_Data/Active_listener/active_a_4_t_0_K_700_M_15_pth_0_001_bth_0_01.dat};
           \end{semilogyaxis}
       \end{tikzpicture}
       \vspace{-2mm}
       \caption{Proposed algorithm for path loss exponent $\alpha=4$ and SINR threshold $\gamma_{\mathrm{th}}=0$ dB.}
       \vspace{-2mm}
       \label{a_4_active}

   \end{figure}

\begin{figure}[!h]
       \centering
       \begin{tikzpicture}
           \begin{semilogyaxis}[
           width = 0.85\linewidth,
           xlabel = {User Density (users/$\text{km}^2$)},
           ylabel = {Outage Probability},
           ymin = 1e-6,
           ymax = 1,
           yminorticks = true,
           ytickten = {-6, -5, -4, -3, -2, -1, 0},
           xmin = 5,
           xmax = 15,
           xtick = {5,7,9,11,13,15},
           xticklabels = {5,7,9,11,13,15},
           grid = major,
           legend cell align = {left},
           legend style = {font = \scriptsize},
           legend style={at={(1,0)},anchor=south east}
           ]

    \addlegendimage{no markers, black, line width=1pt}
    \addlegendentry{Ideal}
    \addlegendimage{no markers, red, line width=1pt}
    \addlegendentry{$K_{\max}=100$}
    \addlegendimage{no markers, blue, line width=1pt}
    \addlegendentry{$K_{\max}=300$}
    \addlegendimage{no markers, green, line width=1pt}
    \addlegendentry{$K_{\max}=500$}
    \addlegendimage{no markers, magenta, line width=1pt}
    \addlegendentry{$K_{\max}=700$}
    \addlegendimage{only marks, mark=square, color=black, mark size=3, line width=1pt}
    \addlegendentry{$M_{\max}=5$}
    \addlegendimage{only marks, mark=triangle, color=black, mark size=3, line width=1pt}
    \addlegendentry{$M_{\max}=15$}

    \addplot[
    black,
    line width = 1pt,
    forget plot
    ]
    table {Paper_Data/Compare_3/duma_a_3_t_0_snr_10_smooth.dat};

    \addplot[
    red,
    mark = triangle,
    mark repeat = 2,
    mark size = 3,
    mark phase = 0,
    line width = 1pt,
    forget plot
    ]
    table {Paper_Data/Active_listener/active_a_3_t_0_K_100_M_15.dat};

    \addplot[
    red,
    mark = square,
    mark repeat = 2,
    mark size = 3,
    mark phase = 0,
    line width = 1pt,
    forget plot
    ]
    table {Paper_Data/Active_listener/active_a_3_t_0_K_100_M_5.dat};


    \addplot[
    blue,
    mark = square,
    mark repeat = 2,
    mark size = 3,
    mark phase = 0,
    line width = 1pt,
    forget plot
    ]
    table {Paper_Data/Active_listener/active_a_3_t_0_K_300_M_5_pth_0_01.dat};

    \addplot[
    blue,
    mark = triangle,
    mark repeat = 2,
    mark size = 3,
    mark phase = 0,
    line width = 1pt,
    forget plot
    ]
    table {Paper_Data/Active_listener/active_a_3_t_0_K_300_M_15_pth_0_01.dat};


    \addplot[
    green,
    mark = square,
    mark repeat = 2,
    mark size = 3,
    mark phase = 0,
    line width = 1pt,
    forget plot
    ]
    table {Paper_Data/Active_listener/active_a_3_t_0_K_500_M_5_pth_0_01.dat};
    
    \addplot[
    green,
    mark = triangle,
    mark repeat = 2,
    mark size = 3,
    mark phase = 0,
    line width = 1pt,
    forget plot
    ]
    table {Paper_Data/Active_listener/active_a_3_t_0_K_500_M_15_pth_0_01.dat};

    \addplot[
    magenta,
    mark = square,
    mark repeat = 2,
    mark size = 3,
    mark phase = 0,
    line width = 1pt,
    forget plot
    ]
    table {Paper_Data/Active_listener/active_a_3_t_0_K_700_M_5_pth_0_001_bth_0_01.dat};
    
    \addplot[
    magenta,
    mark = triangle,
    mark repeat = 2,
    mark size = 3,
    mark phase = 0,
    line width = 1pt,
    forget plot
    ]
    table {Paper_Data/Active_listener/active_a_3_t_0_K_700_M_15_pth_0_001_bth_0_01.dat};
    
           \end{semilogyaxis}
       \end{tikzpicture}
       \vspace{-2mm}
       \caption{Proposed algorithm for path loss exponent $\alpha=3$ and SINR threshold $\gamma_{\mathrm{th}}=0$ dB.}
       \vspace{-2mm}
       \label{a_3_active}

   \end{figure}   

Then we evaluate the performance of the proposed algorithm and the effect of the parameters. The key parameters that tune the performance are $K_{\max},M_{\max},\beta_{\mathrm{th}}$ and $P_{\mathrm{th}}$. The threshold $\beta_{\mathrm{th}}$ is the criterion that allows the RRHs to enter the cluster based on the distance to the user. The cluster size $K_{\max}$ and the sharing threshold $P_{\mathrm{th}}$ have a clear impact on interference mitigation because they dictate how far the interference of a specific user is canceled. A big cluster gives the opportunity for more RRHs to receive the decoded message and the threshold dictates how important the influence of a user is on an RRH. In addition, the parameter $M_{\max}$ plays an important role in the message sharing due to the fact that it corresponds to the number of RRHs a user can be decoded. Combined with the Active Listener sharing where a message is passed only to the RRHs that are needed for the remaining users, it can affect the feedback links.

\begin{table}[!h]
\vspace{-2mm}
\centering
\caption{Normalized threshold parameters $(\beta_{\mathrm{th}}/N_0,P_{\mathrm{th}}/N_0)$}
\resizebox{\columnwidth}{!}{%
\begin{tabular}{c||c|c||c|c}
\hline
 & \multicolumn{2}{c||}{\textbf{$\alpha=3$}} & \multicolumn{2}{c}{\textbf{$\alpha=4$}} \\ \cline{2-5}
\textbf{$K_{\max}$} & \textbf{$M_{\max}=5$} & \textbf{$M_{\max}=15$} & \textbf{$M_{\max}=5$} & \textbf{$M_{\max}=15$} \\ \hline \hline
100 & (0.1, 0.1) & (0.1, 0.1) & (0.1, 0.1) & (0.1, 0.1) \\
300 & (0.1, 0.01) & (0.1, 0.01) & (0.1, 0.1) & (0.1, 0.01) \\
500 & (0.1, 0.01) & (0.1, 0.01) & (0.01, 0.001) & (0.01, 0.001) \\
700 & (0.01, 0.001) & (0.01, 0.001) & (0.01, 0.001) & (0.01, 0.001) \\ \hline
\end{tabular}
}
\label{thresholds}
\end{table}

In Fig. \ref{a_4_active} and Fig. \ref{a_3_active}, we see the performance of the proposed algorithm with various parameters for networks with the same setup. Table \ref{thresholds} presents the selected threshold values normalized to the corresponding noise power $N_0$. It can be observed that the thresholds decrease as $K_{\max}$ increases, so that more RRHs fulfill the requirements to be part of the cluster of each user, and thus the additional capabilities of having larger clusters can be fully taken advantage of. As expected, increasing the cluster size $K_{\max}$ improves the performance of the system in terms of outage probability, as it allows the network to share the decoded messages to more RRHs leading to reduced interference. Furthermore, the effect of different $M_{\max}$ can be spotted in this figures. Specifically, a greater $M_{\max}$ reduces the outage probability of the system, since it allows more RRHs to be checked whether they can decode each user. However, it should be noted that the system benefits from a greater $M_{\max}$ when larger clusters are considered, since in these cases it is more probable that an RRH deeper in the candidate list can decode the user, due to the similar conditions. 

Comparing Fig. \ref{a_4_active} with Fig. \ref{a_3_active}, the effect of $\alpha$ in the performance of the system is verified. It is obvious that regardless of the cluster size, increased path loss exponent supports higher user densities with similar outage probabilities. In addition, in the case of $\alpha=3$ the system does not exhibit similar gains with the increase of $M_{\max}$, due to higher interference between users that necessitates RRHs with strong $s_{u,r}$. Thus, scenarios where RRHs that are sorted lower in $\mathcal{L}_u$ are chosen to decode user $u$ will occur more rarely. Finally, by adequately  tuning the parameters, i.e., increasing $K_{\max}$ and $M_{\max}$, which allows larger clusters and more feedback links in the system, the ideal performance can be better approached, as expected. It should be noted that this ideal performance is the optimal one and a lower bound in terms of outage for the proposed algorithm, since it is achieved by the exhaustive algorithm where each message is decoded in the RRH with the highest SINR and the decoded messages are shared across the entire network.



\begin{figure}[!h]
       \centering
       \begin{tikzpicture}
           \begin{axis}[
           width = 0.85\linewidth,
           xlabel = {User Density (users/$\text{km}^2$)},
           ylabel = {Feedback Links},
           ymin = 50,
           ymax = 800,
           ytick = {100,300,500,700},
           yticklabels = {100,300,500,700},
           xmin = 10,
           xmax = 25,
           grid = major,
           legend cell align = {left},
           legend columns = 2,
           legend style = {font = \scriptsize, fill=white, fill opacity= 0.5, text opacity=1},
           legend style={at={(0,1)},anchor=north west}
           ]

    \addlegendimage{no markers, red, line width=1pt}
    \addlegendentry{$K_{\max}=100$}
    \addlegendimage{no markers, blue, line width=1pt}
    \addlegendentry{$K_{\max}=300$}
    \addlegendimage{no markers, green, line width=1pt}
    \addlegendentry{$K_{\max}=500$}
    \addlegendimage{no markers, magenta, line width=1pt}
    \addlegendentry{$K_{\max}=700$}
    \addlegendimage{only marks, mark=square, color=black, mark size=3, line width=1pt}
    \addlegendentry{$M_{\max}=5$}
    \addlegendimage{only marks, mark=triangle, color=black, mark size=3, line width=1pt}
    \addlegendentry{$M_{\max}=15$}

    \addplot[
    red,
    mark = square,
    mark repeat = 2,
    mark size = 3,
    mark phase = 0,
    line width = 1pt,
    forget plot
    ]
    table {Paper_Data/Links/links_a_4_t_0_K_100_M_5.dat};

    \addplot[
    red,
    mark = triangle,
    mark repeat = 2,
    mark size = 3,
    mark phase = 0,
    line width = 1pt,
    forget plot
    ]
    table {Paper_Data/Links/links_a_4_t_0_K_100_M_15.dat};

    \addplot[
    blue,
    mark = square,
    mark repeat = 2,
    mark size = 3,
    mark phase = 0,
    line width = 1pt,
    forget plot
    ]
    table {Paper_Data/Links/links_a_4_t_0_K_300_M_5.dat};

    \addplot[
    blue,
    mark = triangle,
    mark repeat = 2,
    mark size = 3,
    mark phase = 0,
    line width = 1pt,
    forget plot
    ]
    table {Paper_Data/Links/links_a_4_t_0_K_300_M_15_pth_0_01.dat};

    \addplot[
    green,
    mark = square,
    mark repeat = 2,
    mark size = 3,
    mark phase = 0,
    line width = 1pt,
    forget plot
    ]
    table {Paper_Data/Links/links_a_4_t_0_K_500_M_5_pth_0_001_bth_0_01.dat};

    \addplot[
    green,
    mark = triangle,
    mark repeat = 2,
    mark size = 3,
    mark phase = 0,
    line width = 1pt,
    forget plot
    ]
    table {Paper_Data/Links/links_a_4_t_0_K_500_M_15_pth_0_001_bth_0_01.dat};

    \addplot[
    magenta,
    mark = square,
    mark repeat = 2,
    mark size = 3,
    mark phase = 0,
    line width = 1pt,
    forget plot
    ]
    table {Paper_Data/Links/links_a_4_t_0_K_700_M_5_pth_0_001_bth_0_01.dat};
    
    \addplot[
    magenta,
    mark = triangle,
    mark repeat = 2,
    mark size = 3,
    mark phase = 0,
    line width = 1pt,
    forget plot
    ]
    table {Paper_Data/Links/links_a_4_t_0_K_700_M_15_pth_0_001_bth_0_01.dat};

           \end{axis}
       \end{tikzpicture}
       \vspace{-2mm}
       \caption{Average feedback links per user for path loss exponent $\alpha=4$ and SINR threshold $\gamma_{\mathrm{th}}=0$ dB.}
       \vspace{-2mm}
       \label{a_4_links}

   \end{figure}

\begin{figure}[!h]
       \centering
       \begin{tikzpicture}
           \begin{axis}[
           width = 0.85\linewidth,
           xlabel = {User Density (users/$\text{km}^2$)},
           ylabel = {Feedback Links},
           ymin = 30,
           ymax = 800,
           ytick = {100,300,500,700},
           yticklabels = {100,300,500,700},
           xmin = 5,
           xmax = 15,
           xtick = {5,7,9,11,13,15},
           xticklabels = {5,7,9,11,13,15},
           grid = major,
           legend cell align = {left},
           legend columns = 2,
           legend style = {font = \scriptsize, fill=white, fill opacity= 0.5, text opacity=1},
           legend style={at={(0,1)},anchor=north west}
           ]

    \addlegendimage{no markers, red, line width=1pt}
    \addlegendentry{$K_{\max}=100$}
    \addlegendimage{no markers, blue, line width=1pt}
    \addlegendentry{$K_{\max}=300$}
    \addlegendimage{no markers, green, line width=1pt}
    \addlegendentry{$K_{\max}=500$}
    \addlegendimage{no markers, magenta, line width=1pt}
    \addlegendentry{$K_{\max}=700$}
    \addlegendimage{only marks, mark=square, color=black, mark size=3, line width=1pt}
    \addlegendentry{$M_{\max}=5$}
    \addlegendimage{only marks, mark=triangle, color=black, mark size=3, line width=1pt}
    \addlegendentry{$M_{\max}=15$}
   
    \addplot[
    red,
    mark = square,
    mark repeat = 2,
    mark size = 3,
    mark phase = 0,
    line width = 1pt,
    forget plot
    ]
    table {Paper_Data/Links/links_a_3_t_0_K_100_M_5.dat};

    \addplot[
    red,
    mark = triangle,
    mark repeat = 2,
    mark size = 3,
    mark phase = 0,
    line width = 1pt,
    forget plot
    ]
    table {Paper_Data/Links/links_a_3_t_0_K_100_M_15.dat};

    \addplot[
    blue,
    mark = square,
    mark repeat = 2,
    mark size = 3,
    mark phase = 0,
    line width = 1pt,
    forget plot
    ]
    table {Paper_Data/Links/links_a_3_t_0_K_300_M_5_pth_0_01.dat};
    
    \addplot[
    blue,
    mark = triangle,
    mark repeat = 2,
    mark size = 3,
    mark phase = 0,
    line width = 1pt,
    forget plot
    ]
    table {Paper_Data/Links/links_a_3_t_0_K_300_M_15_pth_0_01.dat};

    \addplot[
    green,
    mark = square,
    mark repeat = 2,
    mark size = 3,
    mark phase = 0,
    line width = 1pt,
    forget plot
    ]
    table {Paper_Data/Links/links_a_3_t_0_K_500_M_5_pth_0_01.dat};

    \addplot[
    green,
    mark = triangle,
    mark repeat = 2,
    mark size = 3,
    mark phase = 0,
    line width = 1pt,
    forget plot
    ]
    table {Paper_Data/Links/links_a_3_t_0_K_500_M_15_pth_0_01.dat};

    \addplot[
    magenta,
    mark = square,
    mark repeat = 2,
    mark size = 3,
    mark phase = 0,
    line width = 1pt,
    forget plot
    ]
    table {Paper_Data/Links/links_a_3_t_0_K_700_M_5_pth_0_001_bth_0_01.dat};
    
    \addplot[
    magenta,
    mark = triangle,
    mark repeat = 2,
    mark size = 3,
    mark phase = 0,
    line width = 1pt,
    forget plot
    ]
    table {Paper_Data/Links/links_a_3_t_0_K_700_M_15_pth_0_001_bth_0_01.dat};

           \end{axis}
       \end{tikzpicture}
       \vspace{-2mm}
       \caption{Average feedback links per user for path loss exponent $\alpha=3$ and SINR threshold $\gamma_{\mathrm{th}}=0$ dB.}
       \vspace{-2mm}
       \label{a_3_links}

   \end{figure}

In Fig. \ref{a_4_links} and Fig. \ref{a_3_links}, the average number of feedback links utilized per user is depicted. As expected, higher values in these parameters increase the feedback links the network utilizes for the user decoding, as more RRHs are part of each user's cluster and may require the decoded message to be shared in order to perform SIC. In addition, the feedback links are increased with the user density of the network, especially for larger cluster sizes, because the increased interference due to the large number of users necessitates sharing the messages to more RRHs. In general, there is a trade-off between outage performance, CSI overhead and feedback link utilization. To improve the system performance, higher $K_{\max},M_{\max}$ and lower thresholds are required. This can achieve the desired outage probability at the expense of more CSI overhead due to larger clusters and more feedback link utilization. Thus, the algorithm parameters must be adjusted according to the communication requirements and the hardware and overhead capabilities the system has. 

\subsection{Network Optimization and SLMS Algorithm}
In this part, the results of the network optimization using the SLMS algorithm are presented. Specifically, we consider a $5\times5$ km square, an SNR $\gamma_0=10$ dB, and the RRH density $\lambda_{\mathrm{RRH}}=10$  RRHs/$\text{km}^2$. The SINR threshold is evaluated over the set $\gamma_{\mathrm{th}}\in \{3,6,9\}$ dB and the path loss exponent is $\alpha=4$. Regarding the parameters of the synergetic decoding algorithm, we consider $K_{\max}=100$, $M_{\max}=K_{\max}$, $\beta_{\mathrm{th}}=0.1 N_0$ and $P_{\mathrm{th}}=0$. Furthermore, every point in the following curves is produced after averaging 1000 network realizations.

\begin{figure}[!h]
       \centering
       \begin{tikzpicture}
           \begin{semilogyaxis}[
           width = 0.85\linewidth,
           xlabel = {User Density (users/$\text{km}^2$)},
           ylabel = {Outage Probability},
           ymin = 0.0002,
           ymax = 1,
           xmin = 2,
           xmax = 8,
           grid = major,
           legend cell align = {left},
           legend style = {font = \scriptsize},
           legend style={at={(1,0)},anchor=south east}
           ]
           
    \addlegendimage{no markers, blue, line width=1pt}
    \addlegendentry{$\gamma_{\mathrm{th}}=9$ dB}
    \addlegendimage{no markers, black, line width=1pt}
    \addlegendentry{$\gamma_{\mathrm{th}}=6$ dB}
    \addlegendimage{no markers, red, line width=1pt}
    \addlegendentry{$\gamma_{\mathrm{th}}=3$ dB}
    \addlegendimage{only marks, mark=star, color=black, mark size=3, line width=1pt}
    \addlegendentry{Initial}
    \addlegendimage{only marks, mark=square, color=black, mark size=3, line width=1pt}
    \addlegendentry{Optimized}

    \addplot[
    black,
    mark = star,
    mark repeat = 1,
    mark size = 3,
    mark phase = 0,
    line width = 1pt
    ]
    table {Paper_Data/Game_theory/game_theory_a_4_t_6_initial_outage.dat};

    \addplot[
    black,
    mark = square,
    mark repeat = 1,
    mark size = 3,
    mark phase = 0,
    line width = 1pt
    ]
    table {Paper_Data/Game_theory/game_theory_a_4_t_6_slms_outage.dat};

    \addplot[
    blue,
    mark = star,
    mark repeat = 1,
    mark size = 3,
    mark phase = 0,
    line width = 1pt
    ]
    table {Paper_Data/Game_theory/game_theory_a_4_t_9_initial_outage.dat};

    \addplot[
    blue,
    mark = square,
    mark repeat = 1,
    mark size = 3,
    mark phase = 0,
    line width = 1pt
    ]
    table {Paper_Data/Game_theory/game_theory_a_4_t_9_slms_outage.dat};

    \addplot[
    red,
    mark = star,
    mark repeat = 1,
    mark size = 3,
    mark phase = 0,
    line width = 1pt
    ]
    table {Paper_Data/Game_theory/game_theory_a_4_t_3_initial_outage.dat};

    \addplot[
    red,
    mark = square,
    mark repeat = 1,
    mark size = 3,
    mark phase = 0,
    line width = 1pt
    ]
    table {Paper_Data/Game_theory/game_theory_a_4_t_3_slms_outage.dat};

           \end{semilogyaxis}
       \end{tikzpicture}
       \vspace{-2mm}
       \caption{Outage probability for the initial network and the optimized one using the SLMS algorithm.}
       \vspace{-2mm}
       \label{game_theory_outage}

   \end{figure} 

Fig. \ref{game_theory_outage} demonstrates the outage probability for both the initial network utilizing all the RRHs and the optimized network produced by the SLMS algorithm versus the user density of the network. As expected, increasing the user density, and thus the number of users, results in a higher outage probability, since the interference is increased. The simulation results reveal that the two performance curves are entirely indistinguishable. This precise alignment serves as a definitive proof that the SLMS algorithm identifies and isolates redundant nodes without introducing a degradation in the system's decoding performance by enforcing a strict lexicographic preference relation. Consequently, the optimized network achieves the exact same level of reliability with a significantly reduced resource utilization as shown in the following figure.

\begin{figure}[!h]
       \centering
       \begin{tikzpicture}
           \begin{axis}[
           width = 0.85\linewidth,
           xlabel = {User Density (users/$\text{km}^2$)},
           ylabel = {Network Deactivation Percentage},
           ymin = 0.5,
           ymax = 1,
           xmin = 2,
           xmax = 8,
           ytick = {0.5, 0.6, 0.7, 0.8, 0.9, 1},
           yticklabels = {0.5, 0.6, 0.7, 0.8, 0.9, 1},
           grid = major,
           legend cell align = {left},
           legend style = {font = \scriptsize},
           legend style={at={(0,0)},anchor=south west}
           ]

    \addlegendimage{no markers, blue, line width=1pt}
    \addlegendentry{$\gamma_{\mathrm{th}}=9$ dB}
    \addlegendimage{no markers, black, line width=1pt}
    \addlegendentry{$\gamma_{\mathrm{th}}=6$ dB}
    \addlegendimage{no markers, red, line width=1pt}
    \addlegendentry{$\gamma_{\mathrm{th}}=3$ dB}       

    \addplot[
    black,
    mark = square,
    mark repeat = 1,
    mark size = 3,
    mark phase = 0,
    line width = 1pt
    ]
    table {Paper_Data/Game_theory/game_theory_a_4_t_6_turned_off.dat};

    \addplot[
    blue,
    mark = square,
    mark repeat = 1,
    mark size = 3,
    mark phase = 0,
    line width = 1pt
    ]
    table {Paper_Data/Game_theory/game_theory_a_4_t_9_turned_off.dat};

    \addplot[
    red,
    mark = square,
    mark repeat = 1,
    mark size = 3,
    mark phase = 0,
    line width = 1pt
    ]
    table {Paper_Data/Game_theory/game_theory_a_4_t_3_turned_off.dat};

           \end{axis}
       \end{tikzpicture}
       \vspace{-2mm}
       \caption{Percentage of the network deactivated applying the SLMS algorithm.}
       \vspace{-2mm}
       \label{game_theory_turned_off}

   \end{figure}    

Fig. \ref{game_theory_turned_off} illustrates the percentage of deactivated RRHs achieved by the proposed SLMS algorithm. Remarkably, the algorithm successfully deactivates up to $92.1\%$ of the network nodes while strictly preserving the original system performance. It is observed that the network deactivation percentage initially exhibits a linear decrease with respect to user density, eventually flattening as the network becomes denser. This asymptotic behavior is fundamentally driven by the increasing system outage probability. Specifically, despite the increased number of active users in the network, the cardinality of the successfully decoded user subset, $|\mathcal{U}_{\mathrm{t}}|$, fails to scale proportionally with the total user count. Because further reductions in the deactivation percentage require the network to actually support and decode these additional users, the curve flattens when the network capacity is exhausted. It should be noted that the flat section of the curves corresponds to very high outage probability in which the system will not operate. Thus, the results demonstrate that a lower SINR threshold, $\gamma_{\mathrm{th}}$, yields superior performance in terms of network deactivation. By relaxing the decoding requirements, the algorithm can resolve users more efficiently, thereby maximizing the number of redundant RRHs that can be deactivated. This substantial reduction in active infrastructure translates to a profound improvement in energy efficiency and resource utilization for the proposed system.

Beyond minimizing active nodes, the proposed framework significantly reduces the utilization of feedback links. Fig. \ref{game_theory_links} compares the average number of feedback links required per decoded user between the initial grand coalition and the optimized network partition. The results demonstrate a substantial reduction and by minimizing these feedback exchanges, the optimized topology alleviates the fronthaul payload and greatly simplifies both the synchronization and routing of decoded messages. Furthermore, it is worth noting that even in the initial, full network, the number of feedback links remains strictly below the upper bound $K_{\max}$, despite the system being configured to share the decoded messages across the entire user's cluster. This baseline efficiency is a direct consequence of the Active Listener technique, once again underscoring its critical role in mitigating network overhead.

\begin{figure}[!h]
       \centering
       \begin{tikzpicture}
           \begin{axis}[
           width = 0.85\linewidth,
           xlabel = {User Density (users/$\text{km}^2$)},
           ylabel = {Feedback Links},
           ymin = 0,
           ymax = 100,
           xmin = 2,
           xmax = 8,
           grid = major,
           legend cell align = {left},
           legend style = {font = \scriptsize},
           legend style={at={(1,0.5)},anchor=south east}
           ]

    \addlegendimage{no markers, blue, line width=1pt}
    \addlegendentry{$\gamma_{\mathrm{th}}=9$ dB}
    \addlegendimage{no markers, black, line width=1pt}
    \addlegendentry{$\gamma_{\mathrm{th}}=6$ dB}
    \addlegendimage{no markers, red, line width=1pt}
    \addlegendentry{$\gamma_{\mathrm{th}}=3$ dB}
    \addlegendimage{only marks, mark=star, color=black, mark size=3, line width=1pt}
    \addlegendentry{Initial}
    \addlegendimage{only marks, mark=square, color=black, mark size=3, line width=1pt}
    \addlegendentry{Optimized}

    \addplot[
    black,
    mark = star,
    mark repeat = 1,
    mark size = 3,
    mark phase = 0,
    line width = 1pt
    ]
    table {Paper_Data/Game_theory/game_theory_a_4_t_6_initial_links.dat};

    \addplot[
    black,
    mark = square,
    mark repeat = 1,
    mark size = 3,
    mark phase = 0,
    line width = 1pt
    ]
    table {Paper_Data/Game_theory/game_theory_a_4_t_6_slms_links.dat};

    \addplot[
    blue,
    mark = star,
    mark repeat = 1,
    mark size = 3,
    mark phase = 0,
    line width = 1pt
    ]
    table {Paper_Data/Game_theory/game_theory_a_4_t_9_initial_links.dat};

    \addplot[
    blue,
    mark = square,
    mark repeat = 1,
    mark size = 3,
    mark phase = 0,
    line width = 1pt
    ]
    table {Paper_Data/Game_theory/game_theory_a_4_t_9_slms_links.dat};

    \addplot[
    red,
    mark = star,
    mark repeat = 1,
    mark size = 3,
    mark phase = 0,
    line width = 1pt
    ]
    table {Paper_Data/Game_theory/game_theory_a_4_t_3_initial_links.dat};

    \addplot[
    red,
    mark = square,
    mark repeat = 1,
    mark size = 3,
    mark phase = 0,
    line width = 1pt
    ]
    table {Paper_Data/Game_theory/game_theory_a_4_t_3_slms_links.dat};

           \end{axis}
       \end{tikzpicture}
       \vspace{-2mm}
       \caption{Comparison of average feedback links per user utilized between the initial network and the optimized one using the SLMS algorithm.}
       \vspace{-2mm}
       \label{game_theory_links}

   \end{figure} 

Another important result that we should mention is the fact that the safeguard step that reverts the network to the $\pi_{\mathcal{R}}$ partition was never once activated in all the simulation procedure. This highlights the convergence and stability capabilities that the SLMS algorithm has. Finally, a not obvious network property has been proved to exist. The results confirm that there is extensive redundancy, allowing optimal performance to be matched using significantly fewer active RRHs.  However, it is critical to emphasize that physically deploying a network with a lower baseline RRH density, $\lambda_{\mathrm{RRH}}$, would fail to achieve this same performance.  The amount of initial RRHs needed remains the same and the reduced partition will change dynamically according to the user locations and their channel state. 

\section{Conclusions}
In this paper, we introduced the DCMA framework with large-scale, cell-free C-RAN architecture. By leveraging stochastic geometry, we established a system model for networks where RRHs decode and cooperatively share user messages via feedback links. A primary contribution of this work is the development of a novel synergetic decoding algorithm. Our evaluations demonstrate that this proposed method effectively resolves complex user assignment and message routing challenges while adhering to practical network constraints. Specifically, by eliminating the need for global CSI and network-wide message broadcasting, the algorithm ensures practical feasibility. Furthermore, our comparative analysis definitively illustrates the significant superiority of this cooperative, SIC-enabled approach over non-cooperative or non-SIC baselines. Importantly, we highlighted the crucial trade-off between overall system performance, CSI acquisition overhead, and feedback link utilization. To further enhance network efficiency, a game theoretic approach was applied to minimize RRH utilization. Through the development of a dynamic coalition formation game and a merge-and-split algorithm, the proposed framework substantially reduced both the use of RRHs and feedback links without compromising the underlying decoding performance. Ultimately, these advancements provide a robust blueprint for the scalable design, practical deployment, and optimization of future 6G DCMA networks.  


\bibliographystyle{ieeetr}
\bibliography{Paper_Bibliography}
\end{document}